\newcommand{\zero}{\textbf{0}}
\newcommand{\oomit}[1]{}
\newcommand{\removal}{\!\setminus\!}
\definecolor{myblack}{gray}{0}
\definecolor{myblack2}{gray}{0}
\begin{document}
\title{Session Types With Multiple Senders Single Receiver (report version)\thanks{Supported by NSFC under grant No. 61972385, 62032024, and 62192732.}}
%
%

\author{Zekun Ji\inst{1,2}  \and
Shuling Wang\inst{1}\thanks{Corresponding author} \and
Xiong Xu\inst{1}}
\authorrunning{Z. Ji et al.}
%
\institute{Skate Key Lab. of Comp. Sci., Institute of Software, Chinese Academy of Sciences\\
\and 
University of Chinese Academy of Sciences, Beijing, China, 
\email{\{jizk,wangsl,xux\}@ios.ac.cn}
 }

\maketitle              
\begin{abstract}
Message passing is a fundamental element in software development, ranging from concurrent and mobile computing to distributed services, but it suffers from communication errors such as deadlocks. Session types are a typing discipline for enforcing safe structured interactions between multiple participants. However, each typed interaction is restricted to having one fixed sender and one fixed receiver. 
In this paper, we extend session types with existential branching types, to handle a common interaction pattern with multiple senders and a single receiver in a synchronized setting, i.e. a receiver is available to receive messages from multiple senders, and which sender actually participates in the interaction cannot be determined till execution. We build the type system with existential branching types, which retain the important properties induced by standard session types: type safety, progress (i.e. deadlock-freedom), and fidelity. We further provide a novel communication type system to guarantee progress of dynamically interleaved multiparty sessions, by abandoning the strong restrictions of existing type systems. Finally, we encode Rust multi-thread primitives in the extended session types to show its expressivity, which can be considered as an attempt to check the deadlock-freedom of Rust multi-thread programs. 

\keywords{Communications  \and Session types  \and  Type system \and Deadlock-freedom  }
\end{abstract}

\section{Introduction}


Distributed and concurrent programming plays an increasingly important role due to the high demand for distributed applications and services across networks. Message passing is one fundamental element in these areas and its correctness is very crucial. Many existing programming languages provide communication primitives, but still leave to the programmers the responsibility of guaranteeing  safety~\cite{DBLP:conf/asplos/TuLSZ19}. Concurrent and distributed programming suffers from communication errors such as deadlocks, and how to guarantee the correctness of communication behaviors is challenging.  

There have been many academic studies on the specification and verification of communicating behaviors. Session types~\cite{honda1993types,DBLP:conf/esop/HondaVK98} are a type theory for describing structured communications between multiple end-point participants to statically ensure safe interactions of them. It has been studied extensively in the context of process calculi~\cite{barwell2022generalised,bettini2008global,denielou2011dynamic,gheri2022design,toninho2017certifying,yoshida2010parameterised} and also in many programming languages~\cite{chen2022ferrite,jespersen2015session,kokke2019Rusty,lagaillardie2020implementing}. A variety of interaction patterns can be captured by session types, via sequencing, branching, recursion, and channel mobility, however, each interaction is typed with one fixed sender and one fixed receiver. In reality, a receiver may be available to receive messages from a set of senders, and which sender actually synchronizes with the receiver to complete the interaction is not determined till execution, e.g.  Rust multiple-producer, single-consumer channels. Existing session types are not expressive enough to handle such communication behaviors. 

This paper extends session types with an $\emph{existential branching}$ type that allows to specify interactions with multiple senders and one receiver, enforcing that at execution one among a set of senders synchronizes with the receiver each time. With the addition of the existential branching type, we need to first re-establish the whole type theory, which should retain the critical properties of session types including type safety, progress (deadlock-freedom), and session fidelity (type adherence). These properties guarantee that communications of all the end-point processes together realize the global interaction protocol specified by a   session global type. 
However, same as existing works based on global types~\cite{scalas2019less,barwell2022generalised},   the latter two properties put very strong restrictions on processes, e.g. each process plays only one role in the protocol. Other alternative approaches on guaranteeing deadlock-freedom loose these restrictions, but instead they must obey strong order structures for nested channels~\cite{bettini2008global,denielou2011dynamic}, or require heavy syntax annotations on the usage of channels~\cite{kobayashi2005type,kobayashi2006new}.
In our approach, we present a novel communication type system that records the execution orders between communication events over different channels, and checks the existence of dependence loops in the transition closure of all the orders. In particular, in order to deal with the dynamic mobility of channels, we define a unification process to transfer the channel to be moved and its communication history to the target receiver.

At the end, to show the expressivity of the extended session types, we encode concurrent primitives of Rust, including multiple-producer, single-consumer channels, mutex, and read-write locks, in the extended session calculus. As a result, a Rust multi-thread program can be encoded as a process that can be checked by the type systems presented in this paper. 
In summary, the contribution of this paper includes the following three aspects:
\vspace{-1.5mm}
\begin{itemize}
\item We extend session type with an existential branching type to specify the interactions that allow multiple senders and one receiver, and establish the type system to ensure type safety,  deadlock-freedom, and session fidelity;
\item We further propose a communication type system for checking deadlock-freedom of concurrent processes, by dropping strong restrictions of existing approaches;
   \item We encode concurrent primitives of Rust as processes in the extended session calculus, which shows the possibility of checking the communication behaviors of Rust using our approach in the future. 
\end{itemize} 
\vspace{-1mm}

 After presenting the related work, the paper is organized as follows: Sect.~\ref{sec:running} gives a motivating example to illustrate our approach; Sect.~\ref{sec:pi} presents the extended session calculus and session types with existential branching resp.; Sect.~\ref{sec:type} defines the new type system and proves the theorems corresponding to type safety, progress, and fidelity, and Sect.~\ref{sec:communication} proposes the communication type system for ensuring progress without restrictions. Sect.~\ref{sec:application} encodes Rust multi-thread primitives in the extended session calculus, and Sect.~\ref{sec:conclusion} concludes the paper and addresses future work.

\oomit{
Distributed and concurrent programming paradigms have played an increasingly important role owing to the high demand for distributed applications and services on the internet. In concurrent programs, message-passing mechanisms are commonly used mechanisms. Besides, memories shared among multiple threads are also common in concurrent programs. There have been many academic studies on the verification of concurrent programs. Session Type is one of them \cite{honda1993types}, which is a type-based system that ensures that concurrent systems are safe by design. It ensures that concurrent systems conform to expectations by type checking and are free of communication errors and deadlocks. As research has progressed, this theory has been extended from its initial binary theory to multiparty session types\cite{honda2016multiparty,yoshida2019very}.

Regarding Session Type, there have been many academic studies extending it to handle different types of situations. By developing a new system of static interaction types, it is possible to make Session Types provide progress properties not only for single sessions but also globally when sessions of different global types are combined \cite{bettini2008global}. To handle highly parameterized communication protocols, introducing parameterization in multiparty session types statically ensures that such multiparty protocols are safe and deadlock-free in their interactions \cite{yoshida2010parameterised}. By examining the behavior of session participants, it is possible to divide the participants of a session into different roles and determine their behavior in the protocol based on their roles, thus allowing for the description of more complex types of multiparty sessions \cite{denielou2011dynamic}. In addition to security in communication, Session Types can also handle a portion of invariance on data by extending simple types into types containing predicates \cite{toninho2017certifying}.

Traditionally, Session Types are based on global and local types, but this approach may be too strict at some times, limiting expressiveness, and at other times may not satisfy consistency and thus not be safe enough. By using more flexible properties, it is possible not to use global types and prove safety under weaker preconditions \cite{scalas2019less}. Building on this work, the introduction of session types with crash-stop failures can handle realistic situations where crashes may occur at any time \cite{barwell2022generalised}. Our work is also based on the session type of this style.

In reality, communication events on a channel may not only occur between a fixed sender and a fixed receiver but may be sent by multiple senders to a particular receiver. Some programming languages provide programming primitives to handle this situation (e.g. Rust). The expressiveness limitations of existing session types make it impossible to express protocols that include such cases, such as the following scenario shown in Figure.\ref{fig:satellite_fig}:



\phantom{There are $n$ buyers who want to buy tickets, but only $m$ tickets ($m<n$), numbered sequentially from $1$ to $m$. The first $m$ buyers who offer to buy tickets can buy them, and the numbering depends on the order in which they come; while the later buyers cannot buy tickets.}


To solve this problem, we extend the session type by adding the original language of multi-producer single-consumer communication events and typing them. The extended session type system can handle communication events of multi-senders and single-receiver whose order cannot be determined in advance. 



In addition, our extended Session Type system can be used to represent mutually exclusive locks and read-write locks when sharing memory among multiple threads. We allow access to locks in an undetermined order before running and can check for problems such as double-lock.

To ensure the security of concurrent programs, we require our type system to satisfy security properties and prove subject reduction. However, the introduction of multi-producer single-consumer communication events presents a challenge for proving the progress of concurrent programs, since type checking may not determine which sender will send the message first. Capabilities and obligations are a traditional way of thinking about concurrent programs when proving progress\cite{kobayashi2005type,kobayashi2006new}. [Add]Inspired by this line of thought, we use a type system based on biased order relations and history and give an algorithm that can check the progress of our extended type system.

Rust is a programming language that has been gaining popularity for its unique combination of performance, safety, and expressiveness. However, when writing, Rust may still encounter some thread-safety issues, such as double-lock \cite{qin2020understanding}. Rust does not have built-in support for session types, making it challenging for developers to ensure safe and predictable communication in their programs.

Several academic studies have explored the combination of Rust and Session Type. Using the Rust programming language, it is possible to define types such that it satisfies the properties of a Session Type and thus can be used as an implementation of a Session Type \cite{jespersen2015session}. Rust concurrent security verification based on Session Type and Good Variation \cite{fowler2019exceptional} adds support for exceptions and canceled sessions \cite{kokke2019Rusty}. Using the Scribble toolchain, protocol-compliant correct code can be generated from a readable protocol\cite{lagaillardie2020implementing}. Ferrite based on $SIL\text{l}_S$ is also an attempt to embed session types in Rust\cite{chen2022ferrite}.

[Add][Changed grammar, different from our work, not good.]

The MPSC communication primitives are provided in the Rust standard library. Thus, we use our extended session type, which represents the MPSC communication primitives on Rust and the associated operations of locks, to show that our extension can be applied to check the concurrency security and progress of programs written in a programming language like Rust.
}

\subsection{Related Work}

Session types were introduced by Honda~\cite{honda1993types,DBLP:conf/esop/HondaVK98} initially for describing and validating interactions between binary parties, and then extended to more generalized multiparties~\cite{bettini2008global,DBLP:conf/popl/HondaYC08}. Session types, either follow a top-down approach to require the communications between multiple participants conform to a global type via projections~\cite{bettini2008global,DBLP:conf/popl/HondaYC08,DBLP:journals/jlp/GhilezanJPSY19,yoshida2019very}, or check local session types of communicating systems directly, by requiring them to  be compatible~\cite{DBLP:conf/esop/HondaVK98}, satisfy behavioural predicates corresponding to safety, liveness, etc~\cite{scalas2019less}, or by rely-guarantee reasoning~\cite{DBLP:journals/jlp/ScalasY18}. 
Session types have been extended for various purposes. Dynamic multirole session type introduces a universal polling operator to describe that an arbitrary number of participants share common interaction behaviors and they can dynamically join and leave~\cite{denielou2011dynamic}. The more general parameterized multiparty session types integrate session types with indexed dependent types, to specify the number of participants and messages as parameters that can dynamically change~\cite{yoshida2010parameterised}. Both of them focus on specifying communication topologies with arbitrary number of participants and universally quantified interactions. Dependent types are integrated into session types to specify and verify the dependency and safety of data that are exchanged among multiple participants~\cite{toninho2017certifying,DBLP:conf/fossacs/ToninhoY18,hanwen2017}.   
There are also extensions of session types on specifying crash-stop failures~\cite{barwell2022generalised} and continuous time motion behaviors~\cite{DBLP:journals/pacmpl/MajumdarYZ20}.   However, all these works remain the interaction patterns of original session types unchanged, with no support to existential branching interactions considered in this paper. 

\oomit{
We present the related work of session types from three aspects.

\paragraph{Theories of Session Types and Its Extension}

Over the past decades, the Session Type proposed by Honda\cite{honda1993types} has played an important role in concurrent program verification. It is a type-based system that ensures that concurrent systems conform to expectations by type checking and are free of communication errors and deadlocks, thus ensuring that the parallel communication protocols are safe at the design level. 

The original Session Type has focused on binary(two-party) communication protocols, so its expressive ability is very limited and cannot express many real communication protocols. Multiparty Session Type~\cite{honda2016multiparty,yoshida2019very} extends the Binary Session type so that it can handle communication protocols involving two or more participants.

In order to support more communication behaviors, there has been a lot of work on extending Session Type. The classic Session Type can only provide progress attributes for a single session. By developing a new system of static interaction types, it is possible to make Session Types provide progress properties not only for single sessions but also globally when sessions of different global types are combined~\cite{bettini2008global}. Some communication protocols can be highly parameterized in their design. To handle highly parameterized communication protocols, introducing parameterization in multiparty session types statically ensures that such multiparty protocols are safe and deadlock-free in their interactions~\cite{yoshida2010parameterised}. By examining the behavior of session participants, it is possible to divide the participants of a session into different roles and determine their behavior in the protocol based on their roles, thus allowing for the description of more complex types of multiparty sessions~\cite{denielou2011dynamic}. However, all of these efforts start from an extended global protocol that generates global session types that are then projected to local session types, which makes the use of such global protocols inconvenient and incompatible with the extensions of the protocol by different efforts.

In addition to extensions to the protocol design of Session Types, there is also some work that extends representations such as projections of Session Types or uses different approaches to guarantee certain properties of parallel programs. An extended Session Types, called Value Dependent Multiparty Session Types,  can satisfy rich constraints imposed on the exchanged data by extending simple types into types containing predicates~\cite{toninho2017certifying}. However, Value Dependent Multiparty Session Types may not be close to many common operations, so this in turn reduces the scope of the model that can be expressed. In reality, the behavior of one thread in a multi-threaded program may be affected by the behavior of other threads such as branch selection. It is feasible to verify concurrency security in such scenarios by introducing the idea of knowledge~\cite{gheri2022design}.  However, this work assumes that any kind of communication allows a thread to know all the information known to the other threads, which is often unrealistic. 

Traditionally, Session Types are based on global and local types, but this approach may be too strict at some times, limiting expressiveness, and at other times may not satisfy consistency and thus not be safe enough. By using more flexible properties, it is possible not to use global types and prove safety under weaker preconditions \cite{scalas2019less}. Building on this work, the introduction of session types with crash-stop failures can handle realistic situations where crashes may occur at any time \cite{barwell2022generalised}. However, this new approach no longer uses the traditional global-to-local type projection directly, so verifying programs requires the use of model checking, which is complex and can not give results for slightly more complex programs.}


The progress property, also called deadlock-freedom, has been studied a lot for channel-based communicating systems. 
The types of channels in~\cite{kobayashi2005type,kobayashi2006new} were decorated explicitly with capabilities/obligations to specify the relations on channel-wise usages, thus they need heavy syntax annotations.  The type systems for ensuring progress in session types were studied in~\cite{DBLP:conf/tgc/Dezani-CiancaglinidY07,bettini2008global}, avoiding any tagging of channels or names in syntax.  However, the typing rules of~\cite{DBLP:conf/tgc/Dezani-CiancaglinidY07,bettini2008global} put strong restrictions on nested channels in order to keep the correct orders between different parties, and moreover, the output is assumed to be asynchronous. Our communication type system was inspired by ~\cite{DBLP:conf/tgc/Dezani-CiancaglinidY07,bettini2008global}, but abandons the restrictions and moreover both input and output are synchronous. There is also work on verifying progress by checking multiparty compatibility of endpoint communications~\cite{imai2022kmclib,DBLP:conf/cav/LangeY19}, via model checking or other verification techniques.


\oomit{ 
\paragraph{Deadlock-Freedom}

For properties such as no deadlock, there are also several research works that give detection methods. Capabilities and obligations are a traditional way of thinking about concurrent programs when proving progress~\cite{kobayashi2005type,kobayashi2006new}. This approach considers formalizing the prerequisite environment for a communication event to occur and the impact on the environment; in order for a communication event to occur successfully on the receiver's side, the sender is obligated to make this send; and whether it can fulfill this obligation is determined by whether it has the capability to successfully complete this communication. By checking the capability level and the obligation level, a concurrent program can be checked to be deadlock-free.

Another tool, Kmclib, defines the compatibility of $k$ participants, written $k-MC$, and, for each process, detects deadlock-free qualities by checking its own $k-MC$ properties and those of the participants with which it interacts~\cite{imai2022kmclib}. Hybrid type theory can characterize interacting sub-protocols, and in this way gives ideas for combining multiple sub-protocols into a global protocol and thus detecting deadlocks~\cite{gheri2023hybrid}. However, the inference of these custom properties usually requires solving complex models of satisfiability problems, so automated checking can only be applied to smaller parallel programs.}

Session types have also been embedded in mainstream programming languages to guarantee the correctness of communication behaviors~\cite{jespersen2015session,fowler2019exceptional,kokke2019Rusty,lagaillardie2020implementing,chen2022ferrite}. For more details, readers are referred to a recent survey~\cite{DBLP:conf/asplos/TuLSZ19}. The commonality of them is to define session channels as enriched types to express the sequences of messages that are expected to occur along the channels, the ordering of messages, or the synchronization between inputs and outputs, to guarantee deadlock-freedom or other safety properties.  
However, none of these efforts focus on how the native parallel programs can be directly checked for communication correctness, and the strong-typed session channels may be not user-friendly to use and only generate code for fixed patterns.

\oomit{
\paragraph{Implementation in Programming Languages}

Much work has considered embodying Session Type in strongly-typed languages such as C and Rust, so that the powerful type checkers provided by strongly-typed language compilers can also check the session types generated by typing communication events. Using the Rust programming language, it is possible to define types such that it satisfies the properties of a Session Type and thus can be used as an implementation of a Session Type ~\cite{jespersen2015session}. Rust concurrent security verification based on Session Type and Good Variation~\cite{fowler2019exceptional} adds support for exceptions and canceled sessions~\cite{kokke2019Rusty}. Using the Scribble toolchain, protocol-compliant correct code can be generated from a readable protocol~\cite{lagaillardie2020implementing}. Ferrite based on $SIL\text{l}_S$ is also an attempt to embed session types in Rust~\cite{chen2022ferrite}. However, none of these efforts focus on how Rust's native parallel program code can be directly verified, and still generate Rust code from the global protocol, which is not user-friendly and can only generate code for a fixed pattern.}

\section{A Motivating Example}
\label{sec:running}

Fig.~\ref{fig:satellite_fig} shows a motivating example of a satellite control system, which includes three parts: the controller \textbf{Ctrl}, the ground control \textbf{Grd}, and the gyroscope \textbf{Gys}. The controller receives data from the gyroscope periodically, then calculates the control command based on the received data and sends it to the gyroscope to obey in the next period;  meanwhile, the ground control also needs to communicate with the controller to request current data or manually modify a certain control parameter of the controller. Therefore,  the controller should be always ready to handle the requests from both the ground and the gyroscope. However, the ground and the gyroscope cannot play one role due to their different responsibilities. Session types cannot specify the interactions of this example. 


   \oomit{ Here, on the one hand, the data queried by the ground control from the control system should be the data sent by the gyroscope; on the other hand, the control commands issued by the control system are determined by the parameters modified by the ground control and the data sent by the gyroscope together. Therefore, the control system needs to have the ability to communicate with both the ground and the gyroscope.}

    Fig.~\ref{fig:satellite_fig} shows how to specify this example using our approach. $G$ defines the global type for the interactions between the controller, the ground control and the gyroscope from a global perspective.  Let $p_r$ denote the ground control, $p_y$  the gyroscope, and  $p_c$ the controller. The controller  $p_c$ first receives a message from ground $p_r$ or gyroscope $p_y$ and then takes further actions according to the source of this message. If the message is from $p_r$, then $p_r$ continues to send a query request $inq$  and receive a response from $p_c$, or send a modification request $mod$ to modify the control parameter and further send the value to be changed. If the message is from $p_y$, then the controller receives  $data$ from $p_y$ and sends a new control command $comm$ back. Local types $s[p_c]$ and $s[A]$, where $A =   \color{myblack2}\{{\color{myblack}p}_i\}_{i\in \{r, y\}}$, are projections from $G$ to the three roles resp., to define the types of communications they participate in from their local view.    
    Especially, $s[p_c]$  defines the local type for  $p_c$, while $s[A]$ defines the local types for $p_r$ and $p_y$ as a whole, as it is not determined statically which among $p_r$ and $p_y$ interacts with $p_c$ at each time.  
\vspace{-2mm}

   
\begin{figure}[htbp]
	\centering
	\begin{minipage}{0.35\linewidth}
		\centering		\includegraphics[width=\linewidth]{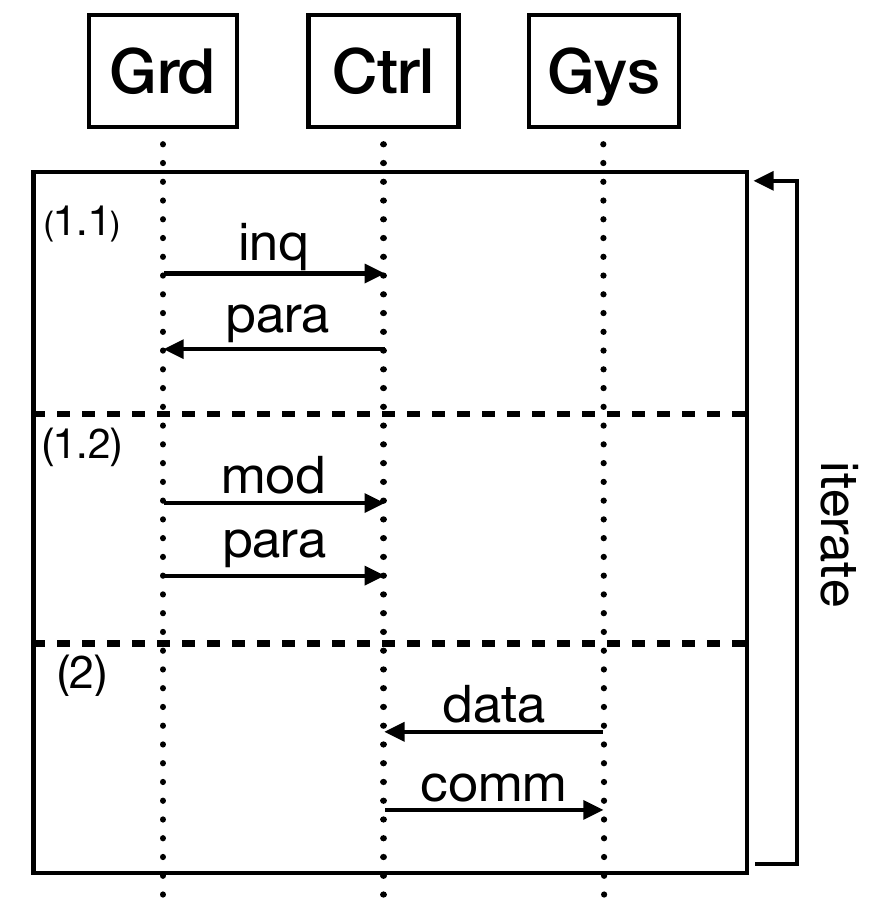}
		\label{fig:sate}
	\end{minipage}
	\begin{minipage}{0.64\linewidth}
		\centering
  \[
  \begin{scriptsize}
  \begin{array}{l}
  {\color{myblack}
            G {\color{black}=} \mu t. \exists_{i\in \{r, y\}} p_i \to p_c:}  \\
           \color{myblack}\quad \left\{\!
            \begin{array}{l}
                \text{l}_i(\_).p_i \!\to\! p_c\!:\!\left\{
                \begin{array}{l}
                     \text{inq}({\color{myblack2}\textbf{str}}). {{p_c}\!\to\! {p_i}}\!:\!\text{para}({\color{myblack2}\textbf{int}}). t \\
                     \text{mod}({\color{myblack2}\textbf{str}}). {{p_i}\!\to\! {p_c}}\!:\!\text{para}({\color{myblack2}\textbf{int}}). t
                \end{array}
                \right\}, i = r\\
                \text{l}_i(\_).p_i\! \to \!p_c\!:\!\left\{
                \begin{array}{l}
                \text{data}({\color{myblack2}\textbf{str}}). p_c\!\to\! p_i\!:\!\text{comm}({\color{myblack2}\textbf{int}}). t
                \end{array}
                \!\right\}, i = y
            \end{array}
            \right\}
  \end{array}
   \end{scriptsize}
  \] 
\vspace{-2mm}
   \[
   \begin{scriptsize}
   \begin{array}{lll}
    \color{myblack2}s[{\color{myblack}p}_c]:&\color{myblack2} \mu t. \exists_{i\in \{r, y\}} {\color{myblack}p}_i\&\\
    &\color{myblack2} \left\{
    \begin{array}{ll}
        \text{l}_i(\_).{\color{myblack}p}_i\&\left\{\!
        \begin{array}{l}
             \text{inq}({\color{myblack2}\textbf{str}}).{\color{myblack}p}_i\!\oplus\! \text{para}({\color{myblack2}\textbf{int}}). t \\
             \text{mod}({\color{myblack2}\textbf{str}}).{\color{myblack}p}_i\& \text{para}({\color{myblack2}\textbf{int}}). t
        \end{array}
        \!\right\}& , i = r\\
        \text{l}_i(\_).{\color{myblack}p}_i\&\{\text{data}({\color{myblack2}\textbf{str}}).{\color{myblack}p}_i\!\oplus\! \text{comm}({\color{myblack2}\textbf{int}}). t\} & ,i = y
    \end{array}
    \right\}\end{array}
    \end{scriptsize}
    \]
\vspace{-2mm}
    \[
    \begin{scriptsize}
    \begin{array}{lll}
    \color{myblack2}s[{\color{myblack}A}]:&\color{myblack2} \mu t.{\color{myblack}p}_c\oplus\\
    &\color{myblack2}\left\{
    \begin{array}{ll}
        \text{l}_i(\_).{\color{myblack}p}_c\!\oplus\! \left\{
        \begin{array}{l}
             \text{inq}({\color{myblack2}\textbf{str}}).{\color{myblack}p}_c\& \text{para}({\color{myblack2}\textbf{int}}). t \\
             \text{mod}({\color{myblack2}\textbf{str}}).{\color{myblack}p}_c\!\oplus\! \text{para}({\color{myblack2}\textbf{int}}). t
        \end{array}
        
        \right\}& , i = r\\
        \text{l}_i(\_).{\color{myblack}p}_c\!\oplus\! \{\text{data}({\color{myblack2}\textbf{str}}).{\color{myblack}p}_c\& \text{comm}({\color{myblack2}\textbf{int}}). t\} & ,i = y
    \end{array}
    \right\}\end{array}
    \end{scriptsize}
    \]
		\label{global_gyro}
	\end{minipage}
 \caption{Example of Attitude Control of Satellite}
 \label{fig:satellite_fig}
\end{figure}
\vspace{-5mm}

\section{MSSR Session Types}
\label{sec:pi}

This section introduces the extended session $\pi$-calculus, and the corresponding session types for typing the calculus.  We call the session calculus and types extended with multiple senders single receiver \emph{MSSR session calculus} and \emph{MSSR session types} resp. 

\subsection{Session $\pi$-calculus}

As an extension of~\cite{barwell2022generalised,scalas2019less}, MSSR session calculus is used to model processes interacting via multiparty channels. The syntax of it is given below, where $c$ stands for channels,  $d$ for data that can be transmitted during communication, including both basic values and channels, $P, Q$ for processes, and $D$ for function declarations. The sessions $s$ in restriction, and variables $x_i, y_i, \tilde{x}$ in branching and function declaration are bounded. We adopt the Barendregt convention that
bounded sessions and variables, and free ones are assumed pairwise distinct.
\oomit{The syntax of MSSR  $\pi$-calculus is given below, where $c$
denotes a channel, $x$ a variable, $s$ a session name, $p$ a participant (or a role), 
$d$ a value, $P,Q$ processes, $l, \text{l}_i$ labels, $D$ functions, $X$ process variables, and $I$  a non-empty indices.  }

\vspace{-2mm}
{\small 
\begin{equation*}
\begin{aligned}
c ::=\ & x\mid s[p] \quad &\mbox{variable, session with roles}\\ 
d ::=\ & x \mid a \mid s[p] & \mbox{variable, constant, session with roles}\\ 
P, Q ::=& \ \zero \mid (\nu s)P \mid P|Q & \mbox{inaction, restriction, parallel composition}\\
&\mid c[q]\oplus \text{l}\langle d\rangle.P & \mbox{selection towards $q$}\\
&\mid c[q]\& \{\text{l}_i(x_i).P_i\}_{i\in I} &\mbox{branching from $q$ with $I \neq \emptyset$}\\
&\mid \exists_{i \in I}c[q_i]\&\{\text{l}_i(y_i).P_i\} & \mbox{existential branching from $\{q_i\}_{i\in I}$}\\
&\mid  \textbf{def}\ D\ \textbf{in} \ P \mid X (\tilde d)& \mbox{process definition, call}\\ 
D ::=&\ X(\tilde x) = P&\mbox{function declaration}
\end{aligned}
\end{equation*}
}
Channels can be a variable, or $s[p]$ representing the channel of participant $p$ (called role $p$ interchangeably) in session $s$. Restriction $(\nu s)P$ declares a new session $s$ with the scope limited to process $P$. Parallel composition $P|Q$ executes $P$ and $Q$ concurrently, with communication actions between two roles synchronized. Selection $c[q]\oplus l\langle d\rangle.P$ sends value $d$ labelled by $l$ to role $q$, using channel $c$, and then continues as $P$. In contrary, branching  $c[q]\& \{\text{l}_i(x_i).P_i\}_{i\in I}$ expects to receive a value from role $q$ using channel $c$, and if a value $d_k$ with label $\text{l}_k$ is received for some $k\in I$, $x_k$ will be replaced by received value and the execution continues as the corresponding $P_k$. The $\text{l}_i$s must be pairwise distinct, and the scopes of $x_i$s are limited to $P_i$s. Existential branching $\exists_{i \in I}c[q_i]\&\{\text{l}_i(y_i).P_i\}$ is the newly added construct, which expects to receive a value from senders  $\{q_i\}_{i\in I}$, and if some $q_k$ sends a value,  $y_k$ will be replaced by the received value and the execution continues as $P_k$.  
Note the difference between branching and existential branching: the former waits for a message from a single role that may provide multiple messages, while the latter waits for a message from a set of different roles. 
For the above three kinds of prefix communications, we call $c$ their \emph{subject}. 

$X(\tilde x) = P$ declares the definition for process variable $X$, which can be called by  $X (\tilde d)$ with actual parameters $\tilde d$. Recursive processes can be modelled using it. 
$\textbf{def}\ D\ \allowbreak\textbf{in} \ P$ introduces the process definitions in $D$ and proceeds as $P$.   
To make sure that a recursive process has a unique solution, we assume $P$ in $X(\tilde x) = P$ is guarded~\cite{scalas2019less}, i.e. before the occurrence of $X$ in $P$, a prefix communication event occurs. For example, $P =  x_1[q] \oplus \text{l}\langle d \rangle. X(-)$ is guarded, while $P =  X(-)$ is not.

\oomit{
It should be pointed out that by combining the two branch operators, a more general scenario that expects to receive from a set of different roles providing a set of different labels can be realized, as shown below:
\[\exists_{i \in I}c[q_i]\&\{\text{l}_i(y_i).(c[q_i]\& \{\text{l}_{ij}(x_{ij}).P_{ij}\}_{j\in J})\}\]
}  
  
\paragraph{\textbf{Semantics}}

Before presenting the semantics, we define the 
reduction context $\mathbb{C}$ as follows: 
\[\mathbb{C} := \mathbb{C}|P \mid (\nu s) \mathbb{C} \mid \textbf{def}\ D\ \textbf{in}\ \mathbb{C} \mid [\ ] \] 
$\mathbb{C}$ is defined with a hole $[\ ]$ and for any process $P$, $\mathbb{C}[P]$ represents the process reached by substituting $P$ for $[\ ]$ in  $\mathbb{C}$. The execution of $\mathbb{C}[P]$ can always start with executing $P$. The semantics of the calculus are then defined by a set of reduction rules, denoted by $P \rightarrow P'$,  meaning that $P$ can execute to $P'$ in one step. We define $P \rightarrow^* P'$ to represent that $P $ executes to $P'$ via zero or more steps of reductions, and  $P\!\! \rightarrow$  to represent that there exists $P'$ such that $P \rightarrow P'$, and $P\!\!\nrightarrow$ otherwise. The semantics are given below:
\[	
{\small \begin{aligned}
			\relax[\&\oplus]&\quad s[p][q]\&\{\text{l}_i(x_i).P_i\}_{i\in I} | s[q][p]\oplus \text{l}_k\langle w \rangle .Q\ \to \ P_k[w/x_k] |Q & \mbox{if}\ k \in I\\
			[\exists\oplus]&\quad \exists_{i \in I}s[p][q_i]\&\{\text{l}_i(y_i).P_i\} | s[q_k][p]\oplus \text{l}_k\langle w\rangle .Q\ \to \ P_k[w/x_k]|Q & \mbox{if}\ k \in I
\\
\relax[X]&\quad \textbf{def}\ X(x_1, ..., x_n) = P\ \textbf{in}\ (X(w_1,...,w_n) |Q)\\ 
			& \to \textbf{def}\ X(x_1, ..., x_n) = P\ \textbf{in}\ (P[w_1/x_1,...,w_n/x_n]|Q)\\
			[Ctx]&\quad P \to P' \mbox{ implies } \mathbb{C}[P]\to \mathbb{C}[P']\\
			[\equiv]&\quad P'\equiv P\ \mbox{and}\ P\to Q\ \mbox{and}\ Q \equiv Q' \mbox{ implies } P'\to Q'\\
		\end{aligned}}
\]

Rule $[\&\oplus]$ defines the synchronized communication between roles $p$ (receiver) and $q$ (sender), matched by label $\text{l}_k$, resulting in the substitution of $w$ for  $x_k$  in continuation $P_k$, where $w$ is a constant $a$ or a session channel, say $s'[r]$. Rule $[\exists \oplus]$ defines the synchronized communication between receiver $p$ and a sender among $\{q_i\}_{i\in I}$, with the sender $q_k$ determined (externally from the sender side) for the communication. Rule $[X]$ replaces the occurrences of call $X(w_1, ..., w_n)$ by rolling its definition $P$ after substituting $w_i$s for corresponding parameters $x_i$s. Rule $[Ctx]$ defines that, if $P$ reduces to $P'$, then context $\mathbb{C}[P]$ reduces to $\mathbb{C}[P']$. Rule $[\equiv]$ defines that reduction is closed with respect to the structural equivalence. Here $P\equiv P'$ holds if $P$ can be reconstructed to $P'$ by $\alpha$-conversion for bounded variable renaming, or the commutativity and associativity laws of parallel composition operator. 

\oomit{Similar abbreviations are used for the typing transition semantics $\rightarrow_G$ and $\rightarrow_L$ defined next section.  }



\oomit{
\begin{example} 

The global type $G$ below models a protocol between seller $r_s$, buyer $r_b$, and distributor $r_d$:

$$
\small{
\begin{array}{ll}
     {\color{myblack} G} &\color{black}= \color{myblack}\exists_{i\in \{b, d\}} r_i \to r_s: \\
     &\color{myblack}\  \small{\left\{
     \begin{array}{ll}
        \text{purchase}.r_s\to r_i:\text{price}({\color{myblack2}\textbf{int}}).r_i\to r_s\{\text{ok}, \text{quit}\}.\textbf{end}&, i = b\\
        \text{deliver}.r_s\to r_i:\text{restock}({\color{myblack2}\textbf{str}}).\textbf{end} &, i = d
     \end{array}
     \right\}}
\end{array}
}
$$

In this example, the seller is ready to receive either a $purchase$ message from the buyer or a $delivery$ message from the distributor. If the former occurs, then it sends the $price$ to the buyer, who then sends the seller its decision on whether or not to purchase; if the latter occurs, then the seller sends a message to the distributor telling it what it wants to $restock$.

The projections of $G$ describe the local actions that programs must implement to play the roles in $G$:
$$
\color{myblack2}s[{\color{myblack}r}_s]:\exists_{i\in \{b, d\}} {\color{myblack}r}_i\& 
      \small{\left\{
     \begin{array}{ll}
        \text{purchase}.{\color{myblack}r}_i\oplus \text{price}({\color{myblack2}\textbf{int}}).{\color{myblack}r}_i\&\{\text{ok}, \text{quit}\}.\textbf{end}&, i = b\\
        \text{deliver}.{\color{myblack}r}_i\oplus \text{restock}({\color{myblack2}\textbf{str}}).\textbf{end} &, i = d
     \end{array}
     \right\}}
$$

$$
\color{myblack2}s[\{{\color{myblack}r}_i\}_{i\in \{b, d\}}]:{\color{myblack}r}_i\oplus 
      \small{\left\{
     \begin{array}{ll}
        \text{purchase}.{\color{myblack}r}_s\& \text{price}({\color{myblack2}\textbf{int}}).{\color{myblack}r}_s\oplus \{\text{ok}, \text{quit}\}.\textbf{end}&, i = b\\
        \text{deliver}.{\color{myblack}r}_s\& \text{restock}({\color{myblack2}\textbf{str}}).\textbf{end} &, i = d
     \end{array}
     \right\}}
$$

The following process defines the interactions between three roles ${\color{myblack}r}_s$, ${\color{myblack}r}_b$ and ${\color{myblack}r}_d$ in session $s$(we omit irrelevant message payloads), in which:  
$$
\small{
\begin{array}{l}
(\nu c)(P_{{\color{myblack}r}_s}|P_{{\color{myblack}r}_b}|P_{{\color{myblack}r}_d}),  
   \mbox{where}  \\
   \left\{
        \begin{array}{lll}
            &P_{{\color{myblack}r}_s}:& \exists_{i \in \{b,d\}} c[{\color{myblack}r}_s][{\color{myblack}r}_i]\& \small{\left\{
     \begin{array}{ll}
     \begin{array}{l}
        \text{purchase}.c[{\color{myblack}r}_s][{\color{myblack}r}_i]\oplus \text{price}(100)\\\qquad \qquad.c[{\color{myblack}r}_s][{\color{myblack}r}_i]\&\{\text{ok},\text{quit}\}.\textbf{end}\end{array}&, i = b\\
        \text{deliver}.c[{\color{myblack}r}_s][{\color{myblack}r}_i]\oplus \text{restock}("bread").\textbf{end} &, i = d
     \end{array}
     \right\}}\\
            &P_{{\color{myblack}r}_b}:& c[{\color{myblack}r}_b][{\color{myblack}r}_s]\oplus \text{purchase}.c[{\color{myblack}r}_b][{\color{myblack}r}_s]\&\text{price}(x).c[{\color{myblack}r}_b][{\color{myblack}r}_s]\oplus \text{ok}\\
            &P_{{\color{myblack}r}_d}:& 0\\
        \end{array}
    \right.
\end{array}}
 $$

 Then let $R = \{{\color{myblack}r}_i\}_{i\in \{b,d\}}$, we have the following typing derivation by the typing rules in Figure~\ref{fig:typesystem}:
 
 \begin{equation*}
\small{
\dfrac{
\dfrac{
\dfrac{...}{c[{\color{myblack}r}]:G\upharpoonright_{{\color{myblack}r}}\vdash P_{{\color{myblack}r}_b}|P_{{\color{myblack}r}_d}}\quad \dfrac{...}{c[{\color{myblack}r}_s]:G\upharpoonright_{{\color{myblack}r}_s}\vdash P_{{\color{myblack}r}_s}}
}
{c[{\color{myblack}{\color{myblack}r}}]:G\upharpoonright_{{\color{myblack}{\color{myblack}r}}}, c[{\color{myblack}r}_s]:G\upharpoonright_{{\color{myblack}r}_s}\vdash P_{{\color{myblack}r}_s}|P_{{\color{myblack}r}_b}|P_{{\color{myblack}r}_d}}[\text{T-Par}]
}{\emptyset \vdash (\nu c)(P_{{\color{myblack}r}_s}|P_{{\color{myblack}r}_b}|P_{{\color{myblack}r}_d}) }
[\text{T-new}]}
\end{equation*}
     
Here we type $(\nu c)(P_{{\color{myblack}r}_s}|P_{{\color{myblack}r}_b}|P_{{\color{myblack}r}_d})$ by rule [T-new], which ensures that all subsequently obtained local types correspond to projections of the same global type. Then the process $(P_{{\color{myblack}r}_s}|P_{{\color{myblack}r}_b}|P_{{\color{myblack}r}_d})$ is typed by rule[T-Par], which splits the typing context in order to ensure that a channel is not used by two parallel sub-processes.

In the rest of the derivation, the processes $P_{{\color{myblack}r}_s}$ and $P_{{\color{myblack}r}_b}|P_{{\color{myblack}r}_d}$ are typed using the rules [T- exist]/[T-select']: each process uses one of the channels with the roles $c[{\color{myblack}r}_s]$ and $c[{\color{myblack}r}]$ according to the types $G\upharpoonright_{{\color{myblack}r}_s}$ and $G\upharpoonright_{\color{myblack}r}$, respectively.
 
\label{example:new_intuitive_example}
\end{example}
}

\oomit{
\begin{example} The following process defines the interactions between four roles $a$, $b_1$, $b_2$ and $c$ in session $s$:  
$$
\small{(\nu s)(P_{a}|P_{b_1}|P_{b_2}|P_{c}),  
   \mbox{where}  \left\{
        \begin{array}{lll}
            &P_a:& \exists_{i \in \{1,2\}} s[a][b_i]\&\{\text{l}_i(x_i)\}\\
            &P_{b_1}:& s[b_1][a]\oplus \text{l}_1(c_1).s[b_1][c]\oplus \text{l}_c(c_2)\\
            &P_{b_2}:& 0\\
            &P_{c}:& \exists_{i \in \{1,2\}} s[c][b_i]\&\{\text{l}_c(x_c)\}\\
        \end{array}
    \right.}
 $$   
 
\label{example:process_ab12c}
\end{example}
}

\subsection{Global and Local Session Types}

Session types define the use of channels in $\pi$-calculus. A global type specifies the interactions of all the participants from a global view, and on the contrary a local type specifies the interactions of each participant from its local view. The projection of a global type upon a specific role produces the local type of the corresponding participant. Below we define the MSSR global and local types, denoted by $G$ and $T$ resp.
\vspace{-4mm}
 
\[ 
\small{
\begin{aligned}
	S ::= & \textbf{int}\mid \textbf{bool} \mid \textbf{real} \mid \langle G \rangle &
 B ::= & S \mid T  \\
	G :: = & p\rightarrow q: \{\text{l}_i(B_i).G_i\}_{i\in I} &
 T ::= & p \oplus \{\text{l}_i(B_i).T_i\}_{i\in I} \mid  p \&\{\text{l}_i(B_i).T_i\}_{i\in I}    \\ &\mid 
	        \exists_{i\in I} q_i \rightarrow p: \{\text{l}_i(B_i).G_i\} 
         &&\mid \exists_{i\in I}p_i\&\{\text{l}_i(B_i).T_i\}   \\ 
         & \mid \mu t. G \mid t \mid \textbf{end}
         &&\mid \mu t.T \mid  t \mid  \textbf{end} 
\end{aligned}}\]
 
\vspace{-2mm}
 Sort types $S$ for values include basic types and global types. 
  $B$ defines types of messages exchanged over channels, which can be sort types and channel types (denoted by $T$).  
 The global interaction type $p\rightarrow q: \{\text{l}_i(B_i).G_i\}_{i\in I}$ describes an interaction between role $p$ and $q$, saying that $p$ sends to $q$ one of the messages labelled by $\text{l}_i$, with payload type $B_i$ for some $i\in I$, and then continues according to the continuation type $G_i$. 
 The global existential interaction type $\exists_{i\in I} q_i \rightarrow p: \{\text{l}_i(B_i).G_i\}$ specifies that there exists $i\in I$ such that role $q_i$ sends to role $p$  a message, labelled by $\text{l}_i$ with payload  type $B_i$ and then continues according to $G_i$. We call $\{q_i\}_{i\in I}$ the \emph{domain} of the existential type. 
 $\mu t. G$ defines a recursive global type and $t$ represents a type variable. Type variables  $t$ are assumed to be guarded in the standard way. $\textbf{end}$ represents that no more communication will occur. 
 \oomit{, defined as follows:
 \[\textit{exdom(G)} \triangleq 
 \left \{
 \begin{array}{lll}
 	 \bigcup_{i \in I}\textit{exdom}(G_i)	& \mbox{if $G = p\rightarrow q: \{\text{l}_i(B_i).G_i\}_{i\in I}$}\\
 	\{\{q_i\}_{i \in I}\} \cup (\bigcup_{i \in I} \textit{exdom}(G_i))	& \mbox{if $G = \exists_{i\in I} q_i \rightarrow p: \{\text{l}_i(B_i).G_i\}$}\\
 	\textit{exdom}(G)	& \mbox{if $G = \mu t. G$}\\
 	\emptyset	& \mbox{otherwise}\\
 \end{array}
 \right. \]
 }

 Local types $T$ define the types of end-point channels.  
 The first three are corresponding to the selection, branching and existential branching processes in session $\pi$-calculus. The selection type (or internal choice) $p \oplus \{\text{l}_i(B_i).T_i\}_{i\in I}$ specifies a channel that acts as a sender and sends a message among $\{\text{l}_i(B_i)\}_{i \in I}$  to receiver $p$, and in contrary, the branching type (or external choice) $p \&\{\text{l}_i(B_i).T_i\}_{i\in I}$
  specifies a channel that acts as a receiver and expects to receive from sender $p$ a message among $\{\text{l}_i(B_i)\}_{i \in I}$, and for both of them, the corresponding continuations are specified as types $T_i$. The existential branching type (or existential external choice)
 $\exists_{i\in I}p_i\&\{\text{l}_i(B_i).T_i\}$  defines a receiving channel that expects to receive from a sender among $\{p_i\}_{i \in I}$ with corresponding message and continuation. 

 \begin{example} 
We use a simpler running example to explain the notations and definitions throughout this paper. 
The global type $G$ given below specifies a protocol between seller $r_s$, buyer $r_b$, and distributor $r_d$: 
$$
\small{
\begin{array}{ll}
     {\color{myblack} G} &\color{black}= \color{myblack}\exists_{i\in \{b, d\}} r_i \to r_s: \\
     &\color{myblack}\  \small{\left\{
     \begin{array}{ll}
        \text{purchase}.r_s\to r_i:\text{price}({\color{myblack2}\textbf{int}}).r_i\to r_s\{\text{ok}.\textbf{end}, \text{quit}.\textbf{end}\}&, i = b\\
        \text{deliver}.r_s\to r_i:\text{restock}({\color{myblack2}\textbf{str}}).\textbf{end} &, i = d
     \end{array}
     \right\}}
\end{array}
}
$$
At first, the seller expects to receive either a $\text{purchase}$ message from the buyer or a $\text{delivery}$ message from the distributor. If the former occurs, the seller sends the $\text{price}$ to the buyer, who in turn sends the seller its decision on whether or not to purchase; if the latter occurs,   the seller sends a message to the distributor about what it wants to $\text{restock}$.

Below presents the process, associated with $G$, for the three roles $r_s, r_b, r_d$ in the syntax of MSSR session calculus (we omit irrelevant message payloads). The buyer sends a purchase request to start a conversation with the seller, while the distributor not.    
$$
\small{
\begin{array}{l}
(\nu s)(P_{{\color{myblack}r}_s}|P_{{\color{myblack}r}_b}|P_{{\color{myblack}r}_d}),  
   \mbox{where}  \\
   \left\{
        \begin{array}{lll}
            &P_{{\color{myblack}r}_s}:& \exists_{i \in \{b,d\}} s[{\color{myblack}r}_s][{\color{myblack}r}_i]\& \small{\left\{
     \begin{array}{ll}
     \begin{array}{l}
        \text{purchase}.s[{\color{myblack}r}_s][{\color{myblack}r}_i]\oplus \text{price}(100)\\\qquad \qquad.s[{\color{myblack}r}_s][{\color{myblack}r}_i]\&\{\text{ok}.\textbf{0},\text{quit}.\textbf{0}\}\end{array}&, i = b\\
        \text{deliver}.s[{\color{myblack}r}_s][{\color{myblack}r}_i]\oplus \text{restock}\text{(``bread")}.\textbf{0} &, i = d
     \end{array}
     \right\}}\\
            &P_{{\color{myblack}r}_b}:& s[{\color{myblack}r}_b][{\color{myblack}r}_s]\oplus \text{purchase}.s[{\color{myblack}r}_b][{\color{myblack}r}_s]\&\text{price}(x).s[{\color{myblack}r}_b][{\color{myblack}r}_s]\oplus \text{ok}.\textbf{0}\\
            &P_{{\color{myblack}r}_d}:& \textbf{0}\\
        \end{array}
    \right.
\end{array}}
 $$ 
 
     \label{example:global}
 \end{example}





\subsection{Projection and Well-Formed Global Types}

The projection of a global type $G$ to a role $p$ is denoted by $G\upharpoonright_p$, which maps $G$ to the local type corresponding to $p$. 
Before defining $G\upharpoonright_p$, consider the projection of the existential branching type $\exists_{i\in I} q_i \rightarrow p: \{\text{l}_i(B_i).G_i\}$ to each $q_i$. At execution, there must exist one $i_0\in I$, such that $q_{i_0}$ sends to $p$ while others in $\{q_i\}_{i\neq i_0, i\in I}$ do not. The choice of $q_{i_0}$ is not determined statically, and meanwhile unique, i.e. there are not two $q_k, q_j$ with $k\neq j$ such that they both send to $p$. Due to this reason, we define the projection of the existential branching type to all $q_i$s as a whole. 
Instead of $G\upharpoonright_p$,   we define $G\upharpoonright_A$, where $A$ is a set of roles, that can be a singleton as before, or contains multiple roles corresponding to the domain of some existential type occurring in $G$. When $A$ is a singleton $\{r\}$, we write $G \upharpoonright_r$ for simplicity. $G\upharpoonright_A$ is defined as follows:
\[{\small
  \begin{array}{lll}
  	(p\rightarrow q: \{\text{l}_i(B_i).G_i\}_{i\in I})\upharpoonright_A \triangleq \left\{
  	\begin{array}{ll}
  		q\oplus\{\text{l}_i(B_i).G_i\upharpoonright_A\}_{i\in I} & \mbox{if $A=\{p\}$}\\
  		p\&\{\text{l}_i(B_i).G_i\upharpoonright_A\}_{i\in I} & \mbox{if $A=\{q\}$}\\
  		\sqcap\{G_i\upharpoonright_A\}_{i\in I} & \mbox{otherwise} 
  	\end{array}
  	\right.\\[2em] 
  	(\exists_{i\in I} q_i \rightarrow p: \{\text{l}_i(B_i).G_i\}) \upharpoonright_A \triangleq \left\{
  	\begin{array}{ll}
  		\exists_{i\in I}q_i\&\{\text{l}_i(B_i).G_i\upharpoonright_A\} & \mbox{if $A=\{p\}$}\\
  		p\oplus\{\text{l}_i(B_i).G_i\upharpoonright_{q_i}\}_{i\in I} & \mbox{if $A=\{q_i\}_{i\in I}$ }\\
  		\sqcap\{G_i\upharpoonright_A\}_{i\in I} & \mbox{otherwise} 
  	\end{array}
  	\right.\\
  	(\mu t. G)\upharpoonright_A \triangleq
  	\left \{
  	\begin{array}{ll}
  		\mu t. G\upharpoonright_A & \mbox{if 
  		$G\upharpoonright_A \neq t'$ for all $t'$}\\
  	\textbf{end} & \mbox{otherwise}
  	\end{array}
   	\right. 
   	\qquad t \upharpoonright_A = t \qquad 
   	\textbf{end} \upharpoonright_A = \textbf{end} 
  \end{array}}
\]
For global type $p\rightarrow q: \{\text{l}_i(B_i).G_i\}_{i\in I}$, the projection to sender $p$ or receiver $q$ results in the corresponding selection and branching local types; while the projection to $A$ that is neither $\{p\}$ nor $\{q\}$, is defined to be the merge of the projections of continuation types $G_i$ (denoted by $\sqcap$, to be explained below).  The third case indicates that, from the eye of any role rather than $p$ and $q$, it is not known which case among $I$ is chosen thus all the continuations in $I$ need to be merged to include all possible choices. 
The projection of $\exists_{i\in I} q_i \rightarrow p: \{\text{l}_i(B_i).G_i\}$ to receiver $p$ results in the existential branching type on the set of senders, while the projection to the set of senders $\{q_i\}_{i \in I}$ is defined as the selection type on receiver $p$. The projection of $\mu t. G$ to $r$ is defined inductively.  

\oomit{
As shown in above definitions, given a global type $G$ and a role $p$ of $G$, the projection of $G$ to $p$ contains two parts: $G\upharpoonright_p$ and $G\upharpoonright_{A_p}$ for any $A_p \in \textit{exdom}(G)$ satisfying $p \in A_p$. According to the syntax structure of $G$, the communications defined by $G\upharpoonright_p$ must occur in $G$ before  $G\upharpoonright_{A_p}$: as soon as $p$ enters an existence domain, the subsequent occurrence of $p$ is bound to this domain. 
}

Now we define the merge operator $\sqcap$ for two local types, presented below:
\vspace{-1mm}
\[
{\small
\begin{array}{l}
   \!\! \begin{array}{lll}
	&&p\&\{\text{l}_i(B_i).T_i\}_{i \in I}
	\sqcap 	p\&\{\text{l}_j(B_j).S_j\}_{j \in J} \\
	& =& 
            p\&\{\text{l}_k(B_k).(T_k \sqcap S_k)\}_{k \in I \cap J}\ \& \\
        &&    p\&\{\text{l}_i(B_i).T_i\}_{i \in I\setminus J}\ \& \\
        &&    p \&\{\text{l}_j(B_j).S_j\}_{j \in J\setminus I}
        \\[.4em]
    \end{array}
    \begin{array}{lll}
    &&	\exists_{i\in I}p_i\&\{\text{l}_i(B_i).T_i\} 
    		\sqcap 	\exists_{j\in J}p_j\&\{\text{l}_j(B_j).S_j\} \\
    		& = &
      \exists_{k\in I\cap J}p_k\&\{\text{l}_k(B_k).(T_k \sqcap S_k)\}\ \& \\
      &&\exists_{i\in I\setminus J}p_i\&\{\text{l}_i(B_i).T_i\}\ \& \\
      &&\exists_{j\in J\setminus I}p_j \&\{\text{l}_j(B_j).S_j\}
    \\[.4em]
    \end{array}\\[.4em]
	
		p\oplus\{\text{l}_i(B_i).T_i\}_{i \in I}
		\sqcap p\oplus\{\text{l}_i(B_i).T_i\}_{i \in I} = p\oplus\{\text{l}_i(B_i).T_i\}_{i \in I}\\[.4em]
\oomit{
&&p\&\{l(B).T\} \sqcap \exists_{j\in J}p_j\&\{\text{l}_j(B_j).S_j\} \\
	&=&  \left\{
 \begin{array}{llll}
 	\exists_{j\in J}p_j\&\{\text{l}_j(B_j).S_j\}\&
 	p\&\{l(B).T\}  &
 	\mbox{if $p\neq p_j$ for all $j \in J$}\\
 \textit{undefined} \quad & \mbox{ otherwise}	\\ 	
 \end{array}
 \right.  }
 	\mu t.T \sqcap \mu t. S = \mu t. (T\sqcap S) \quad t \sqcap t = t \quad \textbf{end}
	\sqcap \textbf{end} = \textbf{end}
\end{array}
}
\] 
The merge of two branching types $\&$ combines together to form a larger branching type to allow to receive more messages. However, two selection types $\oplus$ can only be merged when they have exactly the same choices. Otherwise, a selection type with more (internal) choices will produce unexpected behaviors than the global type as specified.

 
A global type $G$ is \emph{well-formed}, if $G \upharpoonright_A$ is defined
	for each $A \in \textit{roles}(G) \cup \textit{exdom}(G)$, where $\textit{roles}(G)$ returns the set of roles occurring in $G$ and $\textit{exdom}(G)$ returns the set of domains of all existential branching types occurring in $G$. From now on, we only consider well-formed global types.

\begin{example}
The projection of $G$ presented in Exam.~\ref{example:global} upon each role returns the corresponding local type.  Let $R = \{{\color{myblack}r}_i\}_{i\in \{b, d\}}$, then: 
$$
\color{myblack2}s[{\color{myblack}r}_s]: G\upharpoonright_{r_s} = \exists_{i\in \{b, d\}} {\color{myblack}r}_i\& 
      \small{\left\{
     \begin{array}{ll}
        \text{purchase}.{\color{myblack}r}_i\oplus \text{price}({\color{myblack2}\textbf{int}}).{\color{myblack}r}_i\&\{\text{ok}.\textbf{end}, \text{quit}.\textbf{end}\}&, i = b\\
        \text{deliver}.{\color{myblack}r}_i\oplus \text{restock}({\color{myblack2}\textbf{str}}).\textbf{end} &, i = d
     \end{array}
     \right\}}
$$
\vspace{-2mm}
$$
\color{myblack2}s[R]: G\upharpoonright_R = {\color{myblack}r}_i\oplus 
      \small{\left\{
     \begin{array}{ll}
        \text{purchase}.{\color{myblack}r}_s\& \text{price}({\color{myblack2}\textbf{int}}).{\color{myblack}r}_s\oplus \{\text{ok}, \text{quit}\}.\textbf{end}&, i = b\\
        \text{deliver}.{\color{myblack}r}_s\& \text{restock}({\color{myblack2}\textbf{str}}).\textbf{end} &, i = d
     \end{array}
     \right\}}
$$

\label{example:local}
\end{example}





\subsection{Consistency of Global Types and Local Types}
 
The transition semantics of types abstract the communications that might occur over typed channels. They can also be used to investigate the consistency between a global type and its projection. 
First, we define the typing context for channels:
\begin{definition}[Typing context for channels]
    The channel typing context $\Gamma$ maps a channel (e.g. $x, s[p]$) or a set of channels (e.g. $s[A]$) to local types, defined as:
	\[
	{\small \Gamma \triangleq \emptyset \mid \Gamma, s[p]: T \mid \Gamma, s[A]: T \mid \Gamma, x:T}	
	\]
 \label{def:contextchannel}
\end{definition}
\vspace{-5mm}
The \emph{domain} of each $\Gamma$ denotes the set of channels of the form $s[p]$ or $s[A]$. 
We write $\Gamma_1, \Gamma_2$ to denote the union of $\Gamma_1$ and $\Gamma_2$, when they have disjoint domains.


The transition semantics for global and local type is defined by transition relations $G \xrightarrow{\alpha}_G G'$ and $\Gamma \xrightarrow{\alpha'}_L \Gamma'$ resp. Here $\alpha$ represents a communication, while $\alpha'$ can be an input, output, or communication, as shown in the following rules. The typing transition semantics for global and local types are presented in Fig.~\ref{fig:fulltransitionglobal} and Fig.~\ref{fig:fulltransitionlocal} respectively. 

\begin{figure}
    \centering
 \begin{align}
		p\rightarrow q: \{\text{l}_i(B_i).G_i\}_{i\in I}
		\xrightarrow{p \rightarrow q: \text{l}_k(B_k)}  G_k \quad \forall k \in I \tag{G-comm}\\
  \frac{\forall i \in I. G_i \xrightarrow{\alpha} G_i' \quad \{p, q\} \cap \textit{roles}(\alpha) = \emptyset  }{p\rightarrow q: \{\text{l}_i(B_i).G_i\}_{i\in I}
		\xrightarrow{\alpha}  p\rightarrow q: \{\text{l}_i(B_i).G_i'\}_{i\in I}} \tag{G-comm'}
  \\
		\exists_{i\in I} q_i \rightarrow p: \{\text{l}_i(B_i).G_i\} 
		\xrightarrow{q_k \rightarrow p: \text{l}_k(B_k)}  G_k \quad \forall k \in I
		\tag{G-exist}\\
  \frac{\forall i \in I. G_i \xrightarrow{\alpha} G_i' \quad \{p, q_i\}_{i\in I} \cap \textit{roles}(\alpha) = \emptyset  }{\exists_{i\in I} q_i \rightarrow p: \{\text{l}_i(B_i).G_i\} 
		\xrightarrow{\alpha}  \exists_{i\in I} q_i \rightarrow p: \{\text{l}_i(B_i).G_i'}
  \tag{G-exist'}\\
		 \frac{G[\mu t. G/t] \xrightarrow{\alpha} G'}{\mu t. G \xrightarrow{\alpha} G'} \tag{G-rec}
	\end{align}
     \caption{Typing Semantics of MSSR Global Types}    \label{fig:fulltransitionglobal}
\end{figure}

\begin{figure}
    \centering
    \begin{align}
	 s[p]:q\oplus \{\text{l}_i(B_i).T_i\}_{i\in I} \xrightarrow{p:q\oplus \text{l}_k(B_k)}_L s[p]:T_k \mbox{  $\forall k\in I$}
	 \tag{L-select}\\ 
	s[p]:q\&\{\text{l}_i(B_i).T_i\}_{i\in I} \xrightarrow{p:q\&\text{l}_k(B_k)}_L s[p]:T_k 
	\mbox{  $\forall k\in I$} \tag{L-branch}\\
 s[p]:\exists_{i\in I}q_i\&\{\text{l}_i(B_i).T_i\} \xrightarrow{p:q_k\&\text{l}_k(B_k)}_L s[p]:T_k \mbox{  $\forall  k\in I$}\tag{L-exist}\\ 
  s[\{p_i\}_{i \in I}]:q\oplus \{\text{l}_i(B_i).T_i\}_{i\in I} \xrightarrow{p_k:q\oplus \text{l}_k(B_k)}_L s[p_k]:T_k \mbox{  $\forall  k\in I$}
 \tag{L-select'}\\ 
		\frac{\Gamma_1 \xrightarrow{p:q\oplus l(B_1)}_L \Gamma_1 '\quad \Gamma_2 \xrightarrow{q:p\&l(B_2)}_L \Gamma_2 '}
		{\Gamma_1, \Gamma_2 \xrightarrow{p \rightarrow q: l(B_1)}_L \Gamma_1', \Gamma_2'}\tag{L-par} \\
		\frac{T[\mu t. T/t] \xrightarrow{\alpha}_L T'} {\mu t. T \xrightarrow{\alpha}_L T'} \tag{L-rec}
	\end{align}
    \caption{Typing Semantics of MSSR Local Types}
\label{fig:fulltransitionlocal}
\end{figure}

\vspace{-1mm}
{\small
	\begin{align}
			\exists_{i\in I} q_i \rightarrow p: \{\text{l}_i(B_i).G_i\} 
		\xrightarrow{q_k \rightarrow p: \text{l}_k(B_k)}_G  G_k \quad \forall k \in I
		\tag{G-exist}\\
  \frac{\forall i \in I. G_i \xrightarrow{\alpha}_G G_i' \quad \{p, q_i\}_{i\in I} \cap \textit{roles}(\alpha) = \emptyset  }{\exists_{i\in I} q_i \rightarrow p: \{\text{l}_i(B_i).G_i\} 
		\xrightarrow{\alpha}_G  \exists_{i\in I} q_i \rightarrow p: \{\text{l}_i(B_i).G_i'\}}
  \tag{G-exist'}
  \\
  s[p]:\exists_{i\in I}q_i\&\{\text{l}_i(B_i).T_i\} \xrightarrow{p:q_k\&\text{l}_k(B_k)}_L s[p]:T_k \mbox{  $\forall  k\in I$}\tag{L-exist}\\ 
  s[\{p_i\}_{i \in I}]:q\oplus \{\text{l}_i(B_i).T_i\}_{i\in I} \xrightarrow{p_k:q\oplus \text{l}_k(B_k)}_L s[p_k]:T_k \mbox{  $\forall  k\in I$}
 \tag{L-select'}\\
 	\frac{\Gamma_1 \xrightarrow{p:q\oplus \text{l}(B)}_L \Gamma_1 '\quad \Gamma_2 \xrightarrow{q:p\&\text{l}(B)}_L \Gamma_2}
		{\Gamma_1, \Gamma_2 \xrightarrow{p \rightarrow q: \text{l}(B)}_L \Gamma_1', \Gamma_2'}\tag{L-par}
	\end{align}
 }

\noindent Rule (G-exist) performs a global interaction from a sender among $\{q_i\}_{i\in I}$ to $p$, while 
rule (G-exist')  defines the case which permutes the order of two actions that are causally unrelated. The action to be executed for (G-exist') is required to be available no matter what choice is made for the prefix interaction. For local types, rules (L-exist) and (L-select') define the input and output resp., and rule (L-par) defines the synchronization between an input and a compatible output.  
  
\paragraph{\textbf{Consistency of Projection}}

Suppose $G$ is a global type with roles $\{p_1, ..., p_k\}$ and domain sets $\{A_1, ..., A_l\}$. Let $\{s[p_1]: T_1, ..., s[p_k]: T_{k}; s[A_1]:T'_{1}, ..., s[A_l]:T'_{l}\}$ be the local types obtained from the projection of $G$ towards the roles, denoted by $\textit{proj}(G)$ below.  We give the definition of \emph{consistent} global type and local types below. 
\begin{definition}[Consistency]
	A global type $G$ and a set of local types $\Gamma$ are \emph{consistent}, if the following conditions hold: if $G \xrightarrow{p \rightarrow q: \text{l}(B)} G'$, then there must
		exist $\Gamma_1, \Gamma_2$ in $\Gamma$ with  $\Gamma = \Gamma_1, \Gamma_2, \Gamma_3$, such that
		$\Gamma_1, \Gamma_2 \xrightarrow{p \rightarrow q: \text{l}(B)} \Gamma_1', \Gamma_2'$, and let $\Gamma' = \Gamma_1', \Gamma_2', \Gamma_3$,  
		 $G'$ and $\Gamma' $  are consistent; and vice versa. 
\label{definition:consistency}
\end{definition}

\begin{theorem}
	Let $G$ be a well-formed global type, then $G$ and its projection $\textit{proj}(G)$ are consistent.
 \label{theorem:consistency}
\end{theorem}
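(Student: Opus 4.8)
The plan is to prove Theorem~\ref{theorem:consistency} by structural induction on the well-formed global type $G$, following the recursive structure of the projection operator $\upharpoonright_A$. The key observation is that consistency (Definition~\ref{definition:consistency}) is itself a coinductive/recursive property matching transitions of $G$ against synchronizations of $\textit{proj}(G)$, so the induction must carry enough information: specifically, that $\textit{proj}(G')$ for every reachable $G'$ coincides (up to the relevant recursion unfoldings and merges) with the residual local typing context $\Gamma'$ obtained after the matching local transition. I would first set up a generalized statement: for every well-formed $G$ with $\textit{roles}(G) \cup \textit{exdom}(G) = \{A_1,\dots,A_n\}$, the pair $(G, \textit{proj}(G))$ is consistent, and moreover whenever $G \xrightarrow{p\rightarrow q: \text{l}(B)} G'$ the residual $\Gamma'$ is exactly $\textit{proj}(G')$. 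The "and vice versa" direction of the definition requires the symmetric claim: every synchronization $\Gamma_1,\Gamma_2 \xrightarrow{p\rightarrow q:\text{l}(B)}$ available in $\textit{proj}(G)$ is matched by a global transition $G \xrightarrow{p\rightarrow q:\text{l}(B)}$.

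The case analysis proceeds on the head constructor of $G$. For $G = p\rightarrow q:\{\text{l}_i(B_i).G_i\}_{i\in I}$: the only global transitions are (G-comm), firing some $\text{l}_k$, and (G-comm'), permuting a causally independent $\alpha$. Projection gives $s[p]: q\oplus\{\dots\}$, $s[q]: p\&\{\dots\}$, and for every other $A$, $\sqcap_i (G_i\upharpoonright_A)$. For (G-comm) I match with (L-select) on $s[p]$, (L-branch) on $s[q]$, combined by (L-par); the residual on the two channels is $G_k\upharpoonright_p$, $G_k\upharpoonright_q$, and I must argue the merged components $\sqcap_i (G_i\upharpoonright_A)$ on the other roles each "step silently" into $G_k\upharpoonright_A$ — this uses the merge definition (the $k$-branch is a summand of the merge) plus the fact that the merge of branching/existential types is again well-defined because $G$ is well-formed, so I can invoke the IH on $G_k$. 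For (G-comm') the permuted action does not involve $p$ or $q$, so it happens uniformly in each $G_i$ and the induction hypothesis on the $G_i$ applies componentwise. The case $G = \exists_{i\in I} q_i \rightarrow p:\{\text{l}_i(B_i).G_i\}$ is analogous but uses (G-exist)/(G-exist'), matched by (L-exist) on $s[p]$ against (L-select') on $s[\{q_i\}_{i\in I}]$ via (L-par); note the projection to the sender-domain yields $p\oplus\{\text{l}_i(B_i).G_i\upharpoonright_{q_i}\}_{i\in I}$, and (L-select') correctly extracts $s[q_k]: G_k\upharpoonright_{q_k}$, which is precisely what $\textit{proj}(G_k)$ assigns to $q_k$. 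The case $G = \mu t.G_0$ uses (G-rec)/(L-rec): transitions of $\mu t.G_0$ are exactly transitions of $G_0[\mu t.G_0/t]$, and one checks $(\mu t.G_0)\upharpoonright_A = \mu t.(G_0\upharpoonright_A)$ unfolds compatibly, i.e. $(\mu t.G_0)\upharpoonright_A = (G_0[\mu t.G_0/t])\upharpoonright_A$ up to unfolding, reducing to the IH; the degenerate case $G_0\upharpoonright_A = t$ projecting to $\textbf{end}$ must be checked to be harmless since then that role performs no communication. The cases $G = t$ and $G = \textbf{end}$ are vacuous (no transitions).

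The main obstacle I anticipate is the treatment of the merge operator $\sqcap$ in the "otherwise" branches of projection, for roles that are spectators to a choice. Concretely: when $G \xrightarrow{p\rightarrow q:\text{l}_k(B_k)} G_k$, every spectator role $A$ carries the type $\sqcap_{i\in I}(G_i\upharpoonright_A)$ before the step but must carry $G_k\upharpoonright_A$ after, yet the local transition semantics in Fig.~\ref{fig:fulltransitionlocal} has no rule that "discards" the non-chosen summands of a merged branching type — the residual context literally still contains the merged type on $A$. Resolving this requires either (i) showing that the merged type $\sqcap_i(G_i\upharpoonright_A)$ and $G_k\upharpoonright_A$ are equivalent for the purposes of consistency (e.g. via a subtyping or simulation relation, since the merge only adds extra branches that are never triggered because $p,q$ have already committed to $k$), or (ii) strengthening the consistency invariant to quantify over "$\Gamma$ consistent-with $\textit{proj}(G)$ up to merge-refinement" rather than syntactic equality. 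I would pursue (i): establish a lemma that if $\Gamma$ assigns $\sqcap_i T_i$ where $\textit{proj}$ would assign $T_k$, and the rest of $\Gamma$ matches $\textit{proj}(G_k)$, then $(G_k,\Gamma)$ is still consistent — provable by a secondary induction exploiting that subsequent transitions of $G_k$ only ever exercise labels present in $T_k$, which are present in $\sqcap_i T_i$ by the merge definition, so no transition is lost, and spurious extra branches in $\sqcap_i T_i$ are never matched by $G_k$ so no transition is gained. This lemma, together with well-formedness guaranteeing every merge encountered is defined, closes the argument.
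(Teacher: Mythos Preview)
Your approach is the same as the paper's: both directions are argued by induction on the global transition rules, matching (G-comm)/(G-exist) against (L-select)/(L-branch)/(L-exist)/(L-select') combined via (L-par), handling (G-comm')/(G-exist') by pushing the action into every continuation, and treating $\mu t.G$ by unfolding. The paper separates the two directions explicitly rather than packaging them as a single bisimulation, but the case analysis is identical.

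Where you go beyond the paper is exactly the point you flag as the ``main obstacle.'' In Case (G-comm) the paper simply writes $proj(G') = G_k\upharpoonright_p,\, G_k\upharpoonright_q,\, \Gamma'$ with $\Gamma'$ the unchanged spectator context, silently identifying $\sqcap_{i\in I}(G_i\upharpoonright_A)$ with $G_k\upharpoonright_A$ for every spectator $A$; as you observe, these need not coincide syntactically when $A$ is a receiver offered different labels across the branches. Your proposed repair --- an auxiliary lemma that a context carrying $\sqcap_i T_i$ where $proj$ would give $T_k$ is still consistent with $G_k$, because transitions of $G_k$ only exercise labels already present in $T_k \subseteq \sqcap_i T_i$ and the extra merged branches are never fired --- is the standard way to close this gap and is more rigorous than what the paper provides. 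One ingredient from the paper you should retain for the (G-comm')/(G-exist') cases: for the \emph{sender} of the permuted action, the merge rule for selection types forces all $G_i\upharpoonright_m$ to be identical, which is what lets the local selection step fire uniformly across the merged type.
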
 

\begin{proof}
    We first prove that if $G\xrightarrow{\alpha} G'$, $\Gamma = proj(G)$, then there must exist $\Gamma_1, \Gamma_2$ in $\Gamma$ with  $\Gamma = \Gamma_1, \Gamma_2, \Gamma'$, such that $\Gamma_1, \Gamma_2 \xrightarrow{\alpha} \Gamma_1', \Gamma_2'$, and $proj(G') = \Gamma_1', \Gamma_2', \Gamma'$.

    We prove this fact by structural induction on the reduction rules for transition semantics of global types.
    
\noindent \textbf{Case G-comm}: We can safely assume that $G = p\rightarrow q: \{\text{l}_i(B_i).G_i\}_{i\in I}$, $\alpha = p \rightarrow q: \text{l}_k(B_k)$. By projection rule,$proj(G) = \Gamma_1, \Gamma_2, \Gamma'$, for $\Gamma_1 = s[p]: q \oplus \{\text{l}_i(B_i).G_i\upharpoonright_p\}_{i\in I}$, $\Gamma_2 = s[q]: p \& \{\text{l}_i(B_i).G_i\upharpoonright_p\}_{i\in I}$, and $s[p] \notin dom(\Gamma'), s[q]\notin dom(\Gamma')$. Then, by (L-select) and (L-branch), we can get $\Gamma_1 \xrightarrow{p:q\oplus \text{l}_k(B_k)} G_k\upharpoonright_p$ and $\Gamma_2 \xrightarrow{q:p\& \text{l}_k(B_k)} G_k\upharpoonright_q$. By (L-par) we have $\Gamma_1, \Gamma_2 \xrightarrow{p \rightarrow q: \text{l}_k(B_k)} G_k\upharpoonright_p, G_k\upharpoonright_q$. Since $s[p] \notin dom(\Gamma'), s[q]\notin dom(\Gamma')$, $proj(G') = G_k\upharpoonright_p, G_k\upharpoonright_q, \Gamma'$, so we have done.

\noindent \textbf{Case G-exist}: We assume that $G = \exists_{i\in I} p_i \rightarrow q: \{\text{l}_i(B_i).G_i\}$, $\alpha = p \rightarrow q_i: \text{l}_k(B_k)$. By projection rule,$proj(G) = \Gamma_1, \Gamma_2, \Gamma'$, for $\Gamma_1 = s[\{p_i\}_{i\in I}]: q \oplus \{\text{l}_i(B_i).G_i\upharpoonright_{p_i}\}_{i\in I}$, $\Gamma_2 = s[q]\!:\! \exists_{i\in I} p_i \& \{\text{l}_i(B_i).G_i\upharpoonright_q\}$, and $s[\{p_i\}_{i\in I}] \notin dom(\Gamma')$, $s[q]\notin dom(\Gamma')$. Then, by (L-select') and (L-exist), we can get $\Gamma_1 \!\xrightarrow{p_k:q\oplus \text{l}_k(B_k)}\! G_k\!\upharpoonright_{p_k}$ and $\Gamma_2 \!\allowbreak\xrightarrow{q:p_k\& \text{l}_k(B_k)} \!G_k\!\upharpoonright_q$. By (L-par) we have $\Gamma_1, \Gamma_2 \xrightarrow{p_k \rightarrow q: \text{l}_k(B_k)} G_k\upharpoonright_{p_k}, G_k\upharpoonright_q$. Since $s[\{p_i\}_{i\in I}] \notin dom(\Gamma'), s[q]\notin dom(\Gamma')$, $proj(G') = G_k\upharpoonright_{p_k}, G_k\upharpoonright_q, \Gamma'$, so we have done.

\noindent \textbf{Case G-comm'}: We can assume that $G = p\rightarrow q: \{\text{l}_i(B_i).G_i\}_{i\in I}$. Then by (G-comm'), $G' = p\rightarrow q: \{\text{l}_i(B_i).G'_i\}_{i\in I}$, $\forall i \in I. G_i \xrightarrow{\alpha} G_i'$, and $\{p, q\} \cap \textit{roles}(\alpha) = \emptyset$. So we can safely assume that $\alpha = m\rightarrow n:\text{l}_{\alpha}(B_{\alpha})$. Because $G$ is well-formed, we have $\forall r \in roles(G)$, $ G\upharpoonright_r$ is defined. So $\forall r \in roles(G)\backslash \{m, n\}$, $G\upharpoonright_r$ is defined, and $G\upharpoonright_r = G'\upharpoonright_r$, which makes up $\Gamma'$. We define that $proj(G) = \Gamma_1, \Gamma_2, \Gamma'$, where $\Gamma_1= G\upharpoonright_m$, $\Gamma_2= G\upharpoonright_n$. For $\Gamma_1$, first we know that $\forall i \in I$, $G_i \upharpoonright_m = n\oplus \text{l}_{\alpha}(B_{\alpha}).G_{i_{\alpha}}\upharpoonright_m$, and $\Gamma_1 = \sqcap \{G_i \upharpoonright_m\}_{i \in I}$. By the well-formedness and merge rule, there must be $\forall i, j \in I, \{G_i \upharpoonright_m\} = \{G_j \upharpoonright_m\}$, so $\forall i, j \in I, \{G_{i_{\alpha}} \upharpoonright_m\} = \{G_{j_{\alpha}} \upharpoonright_m\}$. Therefore, we can define $G'\upharpoonright_m = \Gamma'_1 = \sqcap \{G_{i_{\alpha}} \upharpoonright_m\}_{i \in I}$. Similarly we can get $\Gamma'_2$. Now by (L-par), we have $\Gamma_1, \Gamma_2 \xrightarrow{\alpha} \Gamma'_1, \Gamma'_2$, and $proj(G') = \Gamma'_1, \Gamma'_2, \Gamma'$, so we have done.

\noindent \textbf{Case G-exist'}:  We assume that $G = \exists_{i\in I} p_i \rightarrow q: \{\text{l}_i(B_i).G_i\}$, the proof is the same as that for case (G-comm').

\noindent \textbf{Case G-rec}: We can assume that $G = \mu t.G_0$. Then, by projection rules, $\forall r \in roles(G)\allowbreak\cup exdom(G), G\upharpoonright_r = (\mu t.G_0)\upharpoonright_r = \mu t.G_0\upharpoonright_r$, so by induction hypothesis, we have done.

Then, we prove that if $proj(G) = \Gamma = \Gamma_1, \Gamma_2, \Gamma'$, $\Gamma_1, \Gamma_2 \xrightarrow{\alpha} \Gamma_1', \Gamma_2'$, then there must exist a transition $G \xrightarrow{\alpha} G'$, and $proj(G') = \Gamma_1', \Gamma_2', \Gamma'$:

We can safely assume that $\alpha = p \rightarrow q: \text{l}_k(B_k)$. By (L-par), we have $\Gamma_1 \xrightarrow{p:q\oplus l(B)}_L \Gamma_1 '$, $ \Gamma_2 \xrightarrow{q:p\&l(B)}_L \Gamma_2 ' $. We can assume that $dom(\Gamma_1) = s[p]$, then $\Gamma_1 = s[p]:q\oplus \{\text{l}_i(B_i).T_{1_i}\}_{i\in I}$ and $\Gamma'_1 = s[p]:T_{1_k}$. Similarly, $dom(\Gamma_2) = s[q]$, $\Gamma_2 = s[q]:p\& \{\text{l}_i(B_i).T_{2_i}\}_{i\in I}$ and $\Gamma'_2 = s[q]:T_{2_k}$. 

We assume that $G = m \rightarrow n: \{\text{l}_{i_G}(B_{i_G}).G_{i_G}\}_{i_G \in I_G}$. Then, if $\{m, n\}\cap \{p, q\} \neq \emptyset$, by projection rule, $G = p \rightarrow q: \{\text{l}_{i}(B_{i}).G_{i}\}_{i \in I}$, $G\xrightarrow{\alpha}G_k$, and $proj(G') = \Gamma'_1, \Gamma'_2, \Gamma'$, we have done. Or, $\{m, n\}\cap \{p, q\} = \emptyset$. In this case, $\Gamma_1 = \sqcap \{G_{i_G} \upharpoonright_{p}\}_{i_G\in I_G}$, $\Gamma_2 = \sqcap \{G_{i_G} \upharpoonright_{q}\}_{i_G\in I_G}$. By merge rule, we have $\forall i_G, j_G\in I_G$, $G_{i_G} \upharpoonright_{p} = G_{j_G} \upharpoonright_{p}$. By projection rule, we have $\forall i_G\in I_G$, if $G_{i_G} \upharpoonright_{p} = s[p]:q\oplus \{\text{l}_{i}(B_i).G_{i_Gi}\upharpoonright_{p} \}_{i\in I}$, then $G_{i_G} \upharpoonright_{q} = s[q]:p\& \{\text{l}_{i}(B_i).G_{i_Gi}\upharpoonright_{q} \}_{i\in I}$. Therefore, $\forall i_G\in I_G$, we have $G_{i_G}\xrightarrow{\alpha}G'_{i_G}$. By (G-comm') and (G-exist'), we can get $G\xrightarrow{\alpha} G'$, where $G = m \rightarrow n: \{\text{l}_{i_G}(B_{i_G}).G'_{i_G}\}_{i_G \in I_G}$. The case when $G = \exists_{i_G \in I_G}m_{i_G} \rightarrow n: \{\text{l}_{i_G}(B_{i_G}).G_{i_G}\}$ is the same. 

Combining the two parts above, Theorem~\ref{theorem:consistency} is proved.
\qed
\end{proof}


\section{Type System}
\label{sec:type}

The type system is defined under two typing contexts: the one for channels $\Gamma$ defined in Def.~\ref{def:contextchannel}, and  $\Theta$ mapping variables to basic types and process variables to a list of types.
\[
	\Theta \triangleq \emptyset \mid \Theta, X: T_1, ..., T_n
\]
The typing judgment $\Theta\cdot \Gamma \vdash P$ states that under the typing contexts $\Theta$ and  $\Gamma$, process $P$ is well-typed. Fig.~\ref{fig:typesystemfull} presents the typing rules for MSSR $\pi$-calculus, to mainly show the difference from the standard session types~\cite{bettini2008global,scalas2019less}.

\begin{figure}[htbp]
\[
\small{
\begin{array}{ccc}
\prftree[r]{T-X}{\Theta (X) = T_1,...,T_n}{\Theta\vdash X:T_1,...,T_n}\quad 
\prftree[r]{T-Par}{\Theta\cdot \Gamma_1 \vdash P_1 \quad \Theta\cdot \Gamma_2 \vdash P_2}{\Theta\cdot \Gamma_1, \Gamma_2 \vdash P_1|P_2} \\[.2em]
\prftree[r]{T-fun}{\Theta\vdash X:S_1, S_2,...,S_n \quad \textbf{end}(\Gamma_0) \quad \Gamma_i \vdash d_i: S_i, \forall i \in \{1, 2, ..., n\}}{\Theta\cdot \Gamma_0, \Gamma_1,...,\Gamma_n \vdash X\langle d_1, d_2,...,d_n\rangle} \\[.2em]
\prftree[r]{T-def}{\Theta, X\!:\!S_1, S_2,...,S_n\cdot x_1\!:\!S_1, x_2\!:\!S_2, ...,x_n\!:\!S_n \vdash P \quad \Theta, X\!:\!S_1, S_2, ...,S_n\cdot \Gamma \vdash Q}{\Theta\cdot \Gamma \vdash \textbf{def}\ X(x_1\!:\!S_1, x_2\!:\!S_2, ..., x_n\!:\!S_n) = P\ \textbf{in}\ Q}\\[.2em]
\prftree[r]{T-branch}{\Gamma_1 \vdash c:q\&\{\text{l}_i(B_i).T_i\}_{i\in I}\quad \forall i\in I.\Theta\cdot\Gamma, y_i:B_i, c:T_i\vdash P_i \quad I \subseteq J}
{\Theta\cdot\Gamma, \Gamma_1\vdash c[q]\&\{\text{l}_i(y_i).P_i\}_{i\in J}}\\[.2em]
\prftree[r]{T-exist}{\Gamma_1 \vdash c:\exists_{i \in I}q_i\&\{\text{l}_i(B_i).T_i\}_{i\in I}\quad \forall i\in I.\Theta\cdot\Gamma, y_i:B_i, c:T_i\vdash P_i\quad I \subseteq J}
{\Theta\cdot\Gamma, \Gamma_1\vdash \exists_{i \in I}c[q_i]\&\{\text{l}_i(y_i).P_i\}_{i\in J}}\\[.2em]
\prftree[r]{T-select}{\Gamma_1 \vdash c:q\oplus \{\text{l}_i(B_i).T_i\}_{i \in I}\quad \Gamma_2\vdash d_k:B_k\quad \Theta\cdot\Gamma, c:T_k\vdash P_k \quad k \in I }
{\Theta\cdot\Gamma, \Gamma_1, \Gamma_2\vdash c[q]\oplus \{\text{l}_k\langle d_k\rangle.P_k\}} \\[.2em]
\prftree[r]{T-select'}{
	\begin{array}{ccc}
		\Gamma_1 \vdash c:\textbf{end}\quad 
		\Gamma_2 \vdash s[A]:q\oplus \{\text{l}_i(B_i).T_i\}_{i \in I}\\
  \Gamma_3\vdash d_k:B_k\quad \Theta\cdot\Gamma, c:T_k\vdash P_k \quad k \in I \quad c=s[p] \Rightarrow p \in A
	\end{array}}
{\Theta\cdot\Gamma, \Gamma_1,\Gamma_2,\Gamma_3\vdash c[q]\oplus \{\text{l}_k\langle d_k\rangle.P_k\}}\\[.2em]
\prftree[r]{T-new}{s\notin \Gamma \quad \Gamma_1 = \{s[p]: G\upharpoonright_p\}_{p \in \textit{roles}(G)} \quad \Theta \cdot \Gamma, \Gamma_1 \vdash P  }{\Theta \cdot \Gamma \vdash (\nu s) P}
\end{array}
}
\]
\caption{The Type System for Processes}
\label{fig:typesystemfull}
\end{figure}

The rules are explained as follows:
\vspace{-1mm} 
\begin{itemize}
\item Rule (T-branch): $c[q]\&\{\text{l}_i(y_i).P_i\}_{i\in J}$ is typed under $\Gamma$ and $\Gamma_1$, if under $\Gamma_1$,  $c$ has a type that is an external choice from $q$ with a smaller label set $I \subseteq J$, and for each $i\in I$, $P_i$ is typed under the channel typing context composed of $\Gamma$, the typing for bounded variable $y_i$ that occurs in $P_i$ and the continuation type $T_i$ of $c$. 

\item  Rule (T-exist): $\exists_{i \in I}c[q_i]\&\{\text{l}_i(y_i).P_i\}_{i\in J}$ is typed, if $c$ has a type that is an existential external choice from the senders in $\{q_i\}_{i \in I}$ satisfying $I \subseteq J$, and for each $i\in I$, $P_i$ is typed under $\Gamma$, the typing for $y_i$ and the continuation type $T_i$ of $c$. 

Both of the above rules indicate that processes allow more external choices. 

\item Rule (T-select): $c[q]\oplus \{\text{l}_k\langle d_k\rangle.P_k\}$ is typed, if $c$ has a type that is an internal choice towards $q$, with a label set containing $\text{l}_k$ inside.

\item Rule (T-select'):  $c[q]\oplus \{\text{l}_k\langle d_k\rangle.P_k\}$ is also typed, if the typing context corresponding to $c$ is $\textbf{end}$, but the typing context for some $s[A]$ exists such that $c$ corresponds to some role in $A$ and $c$ is fixed for performing the communications specified by $s[A]$. $A$ must be the domain of some existential branching, and the role of $c$ is among it. 

\item Rule (T-new): $(\nu s) P$ is typed under $\Gamma$, if $P$ is typed under the typing context composed of $\Gamma$ and the one for session $s$ (i.e. $\Gamma_1$). $\Gamma_1$ guarantees that there exists a global type $G$ such that each role  of $s$ has exactly the local type projected from $G$. 

The full version of the type system is given in Fig.\ref{fig:typesystemfull}. We explain the rest of them here.

\item Rule (T-X): $X$ is typed under $\Theta$ if it is in the domain of the context  $\Theta$, and its type is a list of types corresponding to the parameters of this process variable.

\item Rule (T-Par): $P_1|P_2$ is typed under a  channel typing context if it can be split into two disjoint parts such that $P_1$ and $P_2$ are typed under them resp.  

\item Rule (T-fun): $X\langle d_1, d_2,...,d_n\rangle$ is typed, if the types of the actual parameters $d_i$, $i\in \{1, 2, ..., n\}$, conform to the type of $X$, that is defined in $\Theta$.

\item Rule(T-def): $\textbf{def}\ X(x_1\!:\!S_1, x_2\!:\!S_2, ..., x_n\!:\!S_n) = P\ \textbf{in}\ Q$ is typed if $P$ is typed by assuming that  function $X$ has the type $S_1, S_2, ..., S_n$ (which are used to handle the possible occurrences of $X$ in $P$ for recursion), and furthermore, $Q$ is typed under the context for process variables extended with $X: S_1, S_2, ..., S_n$.

\end{itemize}

 \begin{example} 
We have the following typing derivation for the running example:
 
 \begin{equation*}
\small{
\dfrac{
\dfrac{
\dfrac{...}{s[{\color{myblack}R}]:G\upharpoonright_{{\color{myblack}R}}\vdash P_{{\color{myblack}r}_b}|P_{{\color{myblack}r}_d}}(\text{T-select'})\quad \dfrac{...}{s[{\color{myblack}r}_s]:G\upharpoonright_{{\color{myblack}r}_s}\vdash P_{{\color{myblack}r}_s}}(\text{T- exist})
}
{s[{\color{myblack}{\color{myblack}R}}]:G\upharpoonright_{{\color{myblack}{\color{myblack}R}}}, s[{\color{myblack}r}_s]:G\upharpoonright_{{\color{myblack}r}_s}\vdash P_{{\color{myblack}r}_s}|P_{{\color{myblack}r}_b}|P_{{\color{myblack}r}_d}}(\text{T-Par})
}{\emptyset \vdash (\nu s)(P_{{\color{myblack}r}_s}|P_{{\color{myblack}r}_b}|P_{{\color{myblack}r}_d}) }
(\text{T-new})}
\end{equation*}
The above derivation result shows that the processes are typed with the corresponding local types projected from global type $G$.

\label{example:typing}
 \end{example}

\paragraph{\textbf{Subject Reduction and Progress}}

There are two important properties for evaluating a type system: subject reduction (also called type safety) and progress.  
\begin{theorem}[Subject reduction]
	Assume $\Theta \cdot \Gamma \vdash P$ is derived according to the type system in Fig.~\ref{fig:typesystemfull}. Then, $P \to P'$ implies $\exists$ $\Gamma'$ such that $\Gamma \to_L^{\ast} \Gamma'$ and $\Theta \cdot \Gamma' \vdash P'$.
 \label{theorem:subjectreduction}
\end{theorem}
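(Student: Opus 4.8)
The plan is to proceed by induction on the derivation of the reduction $P \to P'$, following the structure of the reduction rules $[\&\oplus]$, $[\exists\oplus]$, $[X]$, $[Ctx]$, and $[\equiv]$. For each reduction rule, I would invert the typing derivation $\Theta\cdot\Gamma\vdash P$ to determine which typing rules could have produced it, then exhibit the witness $\Gamma'$ and show that $\Gamma\to_L^{*}\Gamma'$ together with $\Theta\cdot\Gamma'\vdash P'$. Before the main induction, I would need the usual battery of auxiliary lemmas: a substitution lemma (if $\Theta\cdot\Gamma, x:B\vdash P$ and $\Gamma'\vdash d:B$, then $\Theta\cdot\Gamma,\Gamma'\vdash P[d/x]$, covering both the basic-value case and the case where $d$ is a channel $s[r]$, so that channel mobility is handled), and an inversion/typability lemma for each process former, plus a lemma showing that structural equivalence preserves typing up to reordering of $\Gamma$ (needed for the $[\equiv]$ case).

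For the core case $[\&\oplus]$, I would start from $\Theta\cdot\Gamma\vdash s[p][q]\&\{\text{l}_i(x_i).P_i\}_{i\in I}\mid s[q][p]\oplus\text{l}_k\langle w\rangle.Q$. By (T-Par) the context splits as $\Gamma = \Gamma_a,\Gamma_b$ with the branching side typed under $\Gamma_a$ and the selection side under $\Gamma_b$. Inversion via (T-branch) gives that $s[p]$ has type $q\&\{\text{l}_i(B_i).T_i\}_{i\in I'}$ with $I'\subseteq I$, and each $P_i$ is typed under $\Gamma, x_i:B_i, s[p]:T_i$; inversion via (T-select) (or (T-select')) gives $s[q]$ has type $p\oplus\{\text{l}_j(B_j').T_j'\}_{j\in I''}$ with $k\in I''$, $w$ has type $B_k'$, and $Q$ is typed with $s[q]:T_k'$. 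The point where consistency of the projection is used: because $\Gamma$ came ultimately from (T-new), the types on $s[p]$ and $s[q]$ are projections of a common global type, so the label sets and payload types agree on $\text{l}_k$, giving $B_k=B_k'$ and a matching transition. Then I take $\Gamma' = \Gamma'',s[p]:T_k,s[q]:T_k'$ where $\Gamma''$ is the part of $\Gamma$ not mentioning $s[p],s[q]$; the transition $\Gamma\xrightarrow{p\to q:\text{l}_k(B_k)}_L\Gamma'$ follows from (L-branch), (L-select), and (L-par), and $\Theta\cdot\Gamma'\vdash P_k[w/x_k]\mid Q$ follows from the substitution lemma. The case $[\exists\oplus]$ is analogous, using (T-exist) and (T-select'), with the wrinkle that the sender's channel $s[q_k]$ is typed through $s[A]$ with $q_k\in A$, so one must check that the existential branching type's projection to the domain $A$ matches; here (L-exist), (L-select'), (L-par) supply the transition.

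The remaining cases are lighter: $[X]$ uses (T-def)/(T-fun)/(T-X) and the substitution lemma to unfold the recursive call, with $\Gamma' = \Gamma$ (zero transition steps, since unfolding a guarded definition performs no communication); $[Ctx]$ is a straightforward induction pushing the transition through $\mathbb{C} = \mathbb{C}\mid P$, $(\nu s)\mathbb{C}$, $\textbf{def}\ D\ \textbf{in}\ \mathbb{C}$ — note that for the $(\nu s)$ context the transition $\Gamma_1\to_L^{*}\Gamma_1'$ on the session-$s$ part must be absorbed and one checks it does not escape the restriction; and $[\equiv]$ uses the equivalence-preserves-typing lemma and the fact that $\to_L$ is insensitive to the ordering of $\Gamma$. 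I expect the main obstacle to be the inversion step in the communication cases when the sender is typed by (T-select') rather than (T-select): one has to track that the channel $s[q_k]$ has $\textbf{end}$ as its own type but is "active" through the aggregated type $s[A]$, reconcile this with how (T-new) installs only $\{s[p]:G\upharpoonright_p\}_{p\in\textit{roles}(G)}$ and separately the $s[A]$ entries, and make sure after the step the resulting $\Gamma'$ is still of the right shape (either the channel continues under $s[A]$ with a residual selection type, or control returns to an individual $s[q_k]:T_k$). Getting the bookkeeping of $s[A]$ versus $s[p]$ entries consistent across the split in (T-Par) and the transition rules (L-select') is the delicate part; the rest is routine structural induction.
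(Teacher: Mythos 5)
Your proposal follows essentially the same route as the paper's proof: induction on the reduction derivation, splitting the context via (T-Par), inverting (T-branch)/(T-select) (resp.\ (T-exist)/(T-select')), invoking the fact that the endpoint types are projections of a common global type installed by (T-new) to match labels and payloads, and then realizing the context transition via (L-branch)/(L-select)/(L-par) together with a substitution step. If anything, you are more explicit than the paper about the auxiliary substitution and inversion lemmas and about the $[X]$ and $[\equiv]$ cases, which the paper's proof leaves implicit.
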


\begin{proof}
    A parallel process always starts with $(\nu s)$. By semantics, $P \to P'$ implys $(\nu s)P \to (\nu s)P'$. Assume that $\Theta \cdot \Gamma, \Gamma_1 \vdash (\nu s)P$, with $s\notin \Gamma$ and $\Gamma_1 = \{s[p]: G\upharpoonright_p\}_{p \in \textit{roles}(G)}$. Similarly, assume that $\Theta \cdot \Gamma, \Gamma_2 \vdash (\nu s)P'$,  with $s\notin \Gamma$ and $\Gamma_2 = \{s[p]: G'\upharpoonright_p\}_{p \in \textit{roles}(G')}$. By consistency, $\Gamma_1 \to_L \Gamma_2$ implies $G\to_G G'$.

    We prove the theorem by structural induction on the transition rules of processes.
    
\noindent \textbf{Case [$\&\oplus$]}: We can safely assume that $P = s[p][q]\& \{\text{l}_i(x_i).P_i\}_{i\in I} | s[q][p]\oplus \text{l}_k(\omega).Q$. Then, we know that $P\to P_k\{\omega/x_k\} | Q$. By assumption, we know $\Theta \cdot \Gamma \vdash P$, so we can assume that $\Gamma = \Gamma_1, \Gamma_2$, $\Theta \cdot \Gamma_1\vdash s[p][q]\& \{\text{l}_i(x_i).P_i\}_{i\in I}$ and $\Theta \cdot \Gamma_2\vdash s[q][p]\oplus \text{l}_k(\omega).Q$. By typing rule T-branch, $\Gamma_1 = \Gamma_3, \Gamma_5$, $ \Gamma_3 \vdash s[p]:q\& \{\text{l}_{P_i}(B_{P_i}).T_{P_i}\}_{i\in I_{\&}}$, $I_{\&} \subseteq I$. By T-select, $\Gamma_2 = \Gamma_4, \Gamma_6$, $ \Gamma_4 \vdash s[q]:p\oplus \{\text{l}_{Q_i}(B_{Q_i}).T_{Q_i}\}_{i\in I_{\oplus}}$, $\Theta\cdot \Gamma_6, \omega:B_k, s[q]:T_{Q_k} \vdash Q$, and $k \in I_{\oplus}$. Because $\exists G$ such that $s[p]:G\upharpoonright_p$ and $s[q]:G\upharpoonright_q$, by projection rules, we have $I_{\&} = I_{\oplus}$, so $k \in I_{\&}$, and $\forall i \in I_{\&}$, $\text{l}_{P_i} = \text{l}_{Q_i}$, $B_{P_i} = B_{Q_i}$. Then, we can get $\Theta\cdot \Gamma_5, x_k:B_{P_k}, s[p]:T_{P_k} \vdash P_k$, so $\Theta\cdot\Gamma_5, \Gamma_6, \omega:B_{P_k}, s[p]:T_{P_k}, s[q]:T_{Q_k}\vdash P_k\{\omega/x_k\}|Q$. By L-select and L-branch, we have $\Gamma_3 \xrightarrow{s[p]:q\&\text{l}_k(B_{x_k})}s[p]:T_{P_k} $ and $\Gamma_4 \xrightarrow{s[q]:p\oplus \text{l}_k(B_{x_k})} s[q]:T_{Q_k}$. Then by L-par, we can get $\Gamma_3, \Gamma_4 \xrightarrow{s[p][q]\text{l}_k(B_{x_k})}s[p]:T_{P_k}, s[q]:T_{Q_k}$. So we let $\Gamma' = \Gamma_5, \Gamma_6, \omega:B_{P_k}, s[p]:T_{P_k}, s[q]:T_{Q_k}$, then we have $\Gamma \rightarrow_L \Gamma'$ and $\Theta \cdot \Gamma' \vdash P_k\{\omega/x_k\} | Q$. By theorem~\ref{theorem:consistency}, exists $G \xrightarrow{\alpha}G'$, such that $G'$ and $\Gamma'$ is consistent.

\noindent \textbf{Case [$\exists\oplus$]}: Similarly, we can safely assume that $P = s[p]\exists\{[q_i]\text{l}_i(x_i).P_i\}_{i\in I} | s[q_k][p]\oplus \text{l}_k(\omega).Q$. Then, we know that $P\to P_k\{\omega/x_k\} | Q$. By assumption, we know $\Theta \cdot \Gamma \vdash P$, so we can assume that $\Gamma = \Gamma_1, \Gamma_2$, $\Theta \cdot \Gamma_1\vdash s[p]\exists\{[q_i]\text{l}_i(x_i).P_i\}_{i\in I}$ and $\Theta \cdot \Gamma_2\vdash s[q_k][p]\oplus \text{l}_k(\omega).Q$. By typing rule T-exist, $\Gamma_1 = \Gamma_3, \Gamma_5$, $ \Gamma_3 \vdash s[p]:\exists_{i\in I_{\&}} q_i\&\{{\text{l}_{P_i}} {(B_{P_i})}\allowbreak .{T_{P_i}}\}$, $I_{\&}\! \subseteq\! I$. By T-select', let $A = \{q_i\}_{i\in I_{\oplus}}$, $\Gamma_2 = \Gamma_4, \Gamma_6, \Gamma_7$, $ \Gamma_4 \vdash s[A]:p\oplus \{\text{l}_{Q_i}(B_{Q_i}).T_{Q_i}\}_{i\in I_{\oplus}}$, $\Theta\cdot \Gamma_6, \omega:B_k, s[A]:T_{Q_k} \vdash Q$,$\Gamma_7\vdash s[p_k]:\textbf{end}$, and $k \in I_{\oplus}$. Because $\exists G$ such that $s[p]:G\upharpoonright_p$ and $s[A]:G\upharpoonright_A$, by projection rules, we have $I_{\&} = I_{\oplus}$, so $k \in I_{\&}$, and $\forall i \in I_{\&}$, $\text{l}_{P_i} = \text{l}_{Q_i}$, $B_{P_i} = B_{Q_i}$. Then, we can get $\Theta\cdot \Gamma_5, x_k:B_{P_k}, s[p]:T_{P_k} \vdash P_k$, so $\Theta\cdot\Gamma_5, \Gamma_6, \Gamma_7, \omega:B_{P_k}, s[p]:T_{P_k}, s[A]:T_{Q_k}\vdash P_k\{\omega/x_k\}|Q$. By L-select' and L-exist, we have $\Gamma_3 \xrightarrow{s[p]:q_k\&\text{l}_k(B_{x_k})}s[p]:T_{P_k} $ and $\Gamma_4 \xrightarrow{s[A]:p\oplus \text{l}_k(B_{x_k})} s[q_k]:T_{Q_k}$. Then by L-par, we can get $\Gamma_3, \Gamma_4 \xrightarrow{s[p][q_k]\text{l}_k(B_{x_k})}s[p]:T_{P_k}, s[q_k]:T_{Q_k}$. So we let $\Gamma' = \Gamma_5, \Gamma_6, \Gamma_7, \omega:B_{P_k}, s[p]:T_{P_k}, s[q_k]:T_{Q_k}$, then we have $\Gamma \rightarrow_L \Gamma'$ and $\Theta \cdot \Gamma' \vdash P_k\{\omega/x_k\} | Q$. By theorem~\ref{theorem:consistency}, exists $G \xrightarrow{\alpha}G'$, such that $G'$ and $\Gamma'$ is consistent.

\noindent \textbf{Case [$Ctx$]}: We can safely assume that $P = \mathbb{C}[Q], P' = \mathbb{C}[Q'] , Q\to Q'$, then we know that $P \to P'$. By assumption, we know $\Theta \cdot \Gamma \vdash P$, so we can assume that $\Gamma = \Gamma_1, \Gamma_2$, $\Theta \cdot \Gamma_1 \vdash Q$ and $\Theta \cdot \Gamma_2 \vdash \mathbb{C}[]$.
By the induction hypothesis, $\exists \Gamma_1 \to \Gamma'_1$, s.t. $\Theta \cdot \Gamma'_1 \vdash Q'$. So we can get $\Theta \cdot \Gamma'_1, \Gamma_2 \vdash \mathbb{C}[Q']$.

\qed
\end{proof}

 The subject reduction guarantees that if a typed process takes a step of evaluation, then the resulting process is also typed. However, it 
 does not guarantee that a well-typed process can always take one further step of execution if it is not terminated: it could be stuck in a deadlock while waiting for dual actions. The
deadlock freedom is usually called
\emph{progress} in some literature~\cite{DBLP:books/daglib/0005958}.  
We prove the following theorem on progress. 
\begin{theorem}[Progress]
 Let $P$ be $(\nu s)(\Pi_{i \in I} P_i)$, with $|\textit{roles}(P_i)| = 1$ and $\textit{roles}(P_i) \cap \textit{roles}(P_j) = \emptyset$ for any $i, j \in I, i \neq j$. Then  $\Theta, \emptyset \vdash (\nu s) P$ and  $P\to ^* P' \nrightarrow$ implies $P' = \textbf{0}$. Here $\textit{roles}(P_i)$ returns the roles of $P_i$. 
 \label{theorem:progress1}
	\end{theorem}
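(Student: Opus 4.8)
The plan is to proceed by analyzing the structure of a well-typed, stuck process $P' $ obtained from $P$. First I would establish that subject reduction (Theorem~\ref{theorem:subjectreduction}) applies along the reduction sequence $P \to^* P'$, so there exists $\Gamma'$ with $\emptyset \vdash (\nu s) P'$ (after pushing the $(\nu s)$ through, using that the type of the restricted session evolves by $\to_L^*$), and moreover a well-formed global type $G'$ that is consistent with the current channel typing context by Theorem~\ref{theorem:consistency}. The key invariant I want to maintain is that $P'$ remains of the shape $(\nu s)(\Pi_{i} P_i')$ with each $P_i'$ owning exactly one role and the role sets pairwise disjoint; this is preserved by all reduction rules since $[\&\oplus]$ and $[\exists\oplus]$ only rewrite the continuations of the two participating roles, $[X]$ unfolds a call within one role, and $[Ctx]/[\equiv]$ don't change the role partition.

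Next I would argue by contradiction: suppose $P' \neq \textbf{0}$ and $P' \nrightarrow$. Then some component $P_i'$ is not $\textbf{0}$, and after stripping $\textbf{def}$-blocks and unfolding guarded recursion (the guardedness assumption ensures unfolding exposes a prefix in finitely many steps without a reduction), $P_i'$ must have a prefix at top level — a selection $s[p][q]\oplus\cdots$, a branching $s[p][q]\&\cdots$, or an existential branching $\exists_{k} s[p][q_k]\&\cdots$. By consistency with $G'$, the local type of $s[p]$ is the projection of $G'$ at $p$, so $G'$ is not $\textbf{end}$; hence $G'$ (after unfolding its leading $\mu$) is either a plain interaction $a \to b:\{\cdots\}$ or an existential interaction $\exists_{k} a_k \to b:\{\cdots\}$. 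In either case the global type dictates a first communication $a\to b:\text{l}(B)$ (resp. $a_k \to b:\text{l}(B)$), and by consistency the channel typing context admits the matching transition $\Gamma_a', \Gamma_b' \xrightarrow{a\to b:\text{l}(B)}$, meaning $s[a]$'s local type is a selection toward $b$ and $s[b]$'s (or $s[A]$'s) is a (possibly existential) branching from $a$.

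Then I would use a "well-typed processes follow their types" lemma — essentially the inverse of the typing rules, read off from Fig.~\ref{fig:typesystemfull} — to conclude that the component owning role $a$ (which exists and is unique by the disjoint-role invariant) must actually have a selection prefix toward $b$ with label $\text{l}$ among its options at top level (again modulo $\textbf{def}$/recursion unfolding), and likewise the component owning $b$ has a branching (or existential branching whose domain contains $a$) prefix from $a$; here rule (T-select') is what lets a role in an existential domain act as the actual sender. These two components are distinct parallel components of $P'$ (since $a \neq b$), so rule $[\&\oplus]$ (or $[\exists\oplus]$) fires on them up to structural congruence, giving $P' \to$, contradicting $P' \nrightarrow$. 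Therefore $P' = \textbf{0}$.

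The main obstacle I anticipate is the bookkeeping around the existential branching and rule (T-select'): I must show that when $G'$ starts with an existential interaction $\exists_{k} a_k\to b:\{\cdots\}$, the process component that genuinely sends is the one owning the specific role $a_{k_0}$ selected at the process level, that this role lies in the domain $A$, and that its typing via (T-select') is consistent with $s[A]$'s selection local type — so that the semantic rule $[\exists\oplus]$ indeed applies. A secondary technical point is handling recursion cleanly: I need that unfolding a guarded recursive process under $\textbf{def}$ terminates at an exposed prefix and that the typing and the global/local types unfold in lockstep (via (L-rec)/(G-rec)), so that "stuck" genuinely means stuck at a prefix rather than spinning in unguarded unfolding; the guardedness hypothesis stated for $X(\tilde x)=P$ is exactly what rules this out.
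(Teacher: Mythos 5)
Your proposal is correct and follows essentially the same route as the paper's (much terser) proof: both argue that the residual global type is not $\textbf{end}$, hence dictates a first interaction, and that the single-role, disjoint-roles hypothesis guarantees the two components owning the participating roles carry matching prefixes so that $[\&\oplus]$ or $[\exists\oplus]$ fires. You make explicit several steps the paper leaves implicit (subject reduction along $P\to^* P'$, consistency via Theorem~\ref{theorem:consistency}, the typing inversion lemma, and the recursion/existential bookkeeping), which only strengthens the argument.
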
 

 \begin{proof}
    By typing rule (T-new), $\Theta \cdot \Gamma ,\Gamma_1\vdash(\Pi_{i \in I} P_i)$ ,where $\Gamma_1 = \{s[p]: G\upharpoonright_p\}_{p \in \textit{roles}(G)}$ and $s\notin \Gamma$. Assume that $P \neq 0$, then $G\neq \textbf{end}$. Therefore, $G = \alpha.G'$, where $\alpha$ is a global interaction. Because $|\textit{roles}(P_i)| = 1$ and $\textit{roles}(P_i) \cap \textit{roles}(P_j) = \emptyset$ for any $i, j \in I, i \neq j$, $\alpha$ is executable, which means that $P \xrightarrow{\alpha}$. So in this case, $P \nrightarrow$ implies $P = 0$.
     \qed
 \end{proof}

 However, the above theorem has a strong restriction by requiring that each sequential process plays only one role. If a   process plays multiple roles in a session, deadlock might occur. The reason is that the projection from a global type to local types loses the orders of actions occurring over different roles. See the example below.

 \begin{example}
     Define $P = (\nu s) P_1 | P_2$, where 
    $P_1 = s[p][r]\&\{\text{l}_1(x_1).s[p][q]\oplus\{\text{l}_2(a_2).\zero\}\}$
  and $P_2 = s[q][p]\&\{\text{l}_2(x_2).s[r][p]\oplus\{\text{l}_1(a_1).\zero\}\}$, where     
 $P_1$ plays one role $p$, and $P_2$ plays two roles $q$ and $r$. Let 
 $G = r \!\rightarrow\! p\!:\! \{\text{l}_1(B_1). p\!\rightarrow\! q\!:\! \{\text{l}_2(B_2).\textbf{end}\}\}$, $\Gamma \vdash a_1 \!:\! B_1, a_2\! :\! B_2$, we have 
 \begin{equation*}
     \begin{array}{c}
G\upharpoonright_r = p \oplus\text{l}_1(B_1).\textbf{end}
 \quad
G\upharpoonright_q = p \&\text{l}_2(B_2).\textbf{end}
 \quad
 G \upharpoonright_p = r\&\text{l}_1(B_1).  q\oplus \text{l}_2(B_2).\textbf{end}
     \end{array}
 \end{equation*} 
 By the type system,
 $\emptyset \cdot \Gamma \vdash P$. However, $P$ leads to a deadlock.
 In next section, we will define a communication type system by abandoning the restriction on a single role. 
 \label{example:deadlock3}
 \end{example}

\paragraph{\textbf{Session Fidelity}}
The session fidelity connects process reductions to
typing context reductions. It says that the interactions of a typed process follow the transitions of corresponding local types, thus eventually follow the protocol defined by the global type. 

\begin{theorem}[Session Fidelity]
Let $P$ be $\Pi_{i \in I} P_i$, with $|\textit{roles}(P_i)| = 1$ and $\textit{roles}(P_i) \cap \textit{roles}(P_j) = \emptyset$ for any $i, j \in I, i \neq j$. If $\Theta, \Gamma \vdash P$, $\Gamma = \{s[p]: G\upharpoonright_p\}_{p \in \textit{roles}(G)}$ for some global type $G$, and $\Gamma \to_L$,  then there exist $\Gamma'$ and $P'$ such that $\Gamma \to_L \Gamma'$, $P\to  P'$ and $\Theta \cdot \Gamma' \vdash P'$.
\label{theorem:session_fidelity}
\end{theorem}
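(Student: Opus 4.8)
The plan is to lift the hypothesised local-context transition to the global level via Theorem~\ref{theorem:consistency}, realise the \emph{head} action of the (recursion-unfolded) global type, and then exhibit the synchronisation performed by the two single-role components of $P$ that own the sender and receiver channels; preservation of typing then follows from a routine substitution lemma.

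First I would note that $\Gamma = \textit{proj}(G)$, so $\Gamma$ also carries the entries $s[A]$ for the domains $A$ of the existential interactions of $G$. Since $\Gamma \to_L$, the ``vice versa'' direction of Theorem~\ref{theorem:consistency} gives $G \to_G G'$, hence $G \neq \textbf{end}$; because type variables are guarded, finitely many applications of (G-rec) unfold $G$ to a type whose head is either $p \to q: \{\text{l}_i(B_i).G_i\}_{i\in I}$ or $\exists_{i\in I} q_i \to p: \{\text{l}_i(B_i).G_i\}$. I would take the action to be realised to be this head interaction — it is always enabled, by (G-comm) resp.\ (G-exist) — and let $\Gamma \to_L \Gamma'$ be the matching local transition, which exists by consistency with $\Gamma' = \textit{proj}(G')$.

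Next I would perform inversion on $\Theta \cdot \Gamma \vdash P$, using $|\textit{roles}(P_i)| = 1$ and pairwise role-disjointness to pin down the relevant components. In the interaction case: since $G\upharpoonright_p$ starts with $\oplus$ and $G\upharpoonright_q$ with $\&$, the unique component owning $s[p]$ must (after unfolding its recursive definitions) be $s[p][q]\oplus\text{l}_k\langle d_k\rangle.P_p'$ with $k \in I$ by rule (T-select), and the unique component owning $s[q]$ must be $s[q][p]\&\{\text{l}_j(y_j).P_{q,j}'\}_{j\in J}$ with $I \subseteq J$ by rule (T-branch); as $p\neq q$ these are distinct components, so rule $[\&\oplus]$ fires, giving $P \to P'$ with $P'$ obtained by replacing them by $P_p'$ and $P_{q,k}'[d_k/y_k]$, and this step matches the chosen $\Gamma\to_L\Gamma'$ via (L-select)/(L-branch)/(L-par). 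In the existential case: the component owning $s[p]$ is an existential branching process over $\{q_i\}_{i\in I}$ by (T-exist), and the component owning the entry $s[A]$ with $A=\{q_i\}_{i\in I}$ is, by rule (T-select'), a selection process $s[q_k][p]\oplus\text{l}_k\langle d_k\rangle.(\cdots)$ for a uniquely determined $q_k\in A$ — which is exactly the sender fixed by rule $[\exists\oplus]$ — so that rule fires, matching $\Gamma\to_L\Gamma'$ via (L-exist)/(L-select')/(L-par). In both cases $\Theta \cdot \Gamma' \vdash P'$ follows by recombining the continuation typings delivered by the inversions, invoking a substitution lemma for the payload (and, when the payload is a delegated channel, the fact that the corresponding typing entry travels with it, exactly as in the proof of Theorem~\ref{theorem:subjectreduction}).

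The step I expect to be the main obstacle is the existential case: one must show that the $s[A]$ entry of $\Gamma$ is consumed by exactly one component — which thereby plays the chosen sender $q_k$ — and that the continuation type $G_k\upharpoonright_{q_k}$ that (T-select') assigns to $s[q_k]$ is compatible with $\textit{proj}(G')$, i.e.\ that the whole-domain projection $G\upharpoonright_A$ and the per-role projections of the continuation $G_k$ (including the fact that the non-chosen domain roles leave no orphaned entry) fit together across the transition; the well-formedness of $G$ and the shape of the merge operator $\sqcap$ are what make this go through. A secondary, purely bookkeeping, point is recursion at the type and process level: as in Theorems~\ref{theorem:subjectreduction} and~\ref{theorem:progress1}, the unfoldings via (G-rec), (L-rec), rule $[X]$ and structural equivalence are performed silently before the inversions above.
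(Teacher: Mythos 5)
Your proposal is correct and follows essentially the same route as the paper's proof: inversion on the typing judgement together with the single-role and role-disjointness hypotheses to isolate the unique sender and receiver components, the label-set inclusions ($k \in I_p \subseteq I_q \subseteq J_q$) to show the synchronisation $[\&\oplus]$ (resp.\ $[\exists\oplus]$) fires, and recombination of the continuation typings. The only cosmetic difference is that you locate the enabled action by lifting to $G$ via Theorem~\ref{theorem:consistency} and unfolding to its head, whereas the paper works directly from the (L-par) decomposition of the assumed $\Gamma \to_L$; your more explicit treatment of the $s[A]$/existential case fills in a step the paper dismisses with ``the proof is the same.''
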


\begin{proof}
    By rule (L-par), $\Gamma \to_L$ implies $\exists \Gamma_1, \Gamma_2$, s.t. $\Gamma = \Gamma_1, \Gamma_2, \Gamma'$, $\Gamma_1 \xrightarrow{p:q\oplus l(B_1)}_L$ and $\Gamma_2 \xrightarrow{q:p\&l(B_2)}_L$. We can safely assume that $dom(\Gamma_1) = s[p]$(for $s[A]$, the proof is the same), then $\Gamma_1$ is derived by (T-select), and $\Gamma_2$ is derived by (T-branch). We let $\Gamma_1 =  s[p]:q\oplus \{\text{l}_{i_p}(B_{i_p}).T_{i_p}\}_{i_p \in I_p}$,  $\Gamma_2 = s[q]:p\&\{\text{l}_{i_q}(B_{i_q}).T_{i_q}\}_{i_q\in I_q}$, $k\in I_p\subseteq I_q$, $l = \text{l}_{k_p} = \text{l}_{k_q}$, $B = B_{k_p} = B_{k_q}$.

    Then, because $|\textit{roles}(P_i)| = 1$ and $\textit{roles}(P_i) \cap \textit{roles}(P_j) = \emptyset$, we have $P = P_p|P_q|P'$, where $P_p$ only plays role $p$, and $P_q$ only plays role $q$. That means $P_p = c[q]\oplus \{\text{l}_{k'}\langle d_{k'}\rangle.P_{p_{k'}}\}$, $P_q = c[q]\&\{\text{l}_i(y_i).P_{q_i}\}_{i\in J_q}$ by typing rule (T-select) and (T-branch), where $k' \in I_p$ and $I_q \subseteq J_q$. For $I_p\subseteq I_q$, we can get $k' \in J_q$. Therefore, by semantics [$\&\oplus$], we have $P_p|P_q \to P_{p_{k'}}|P_{q_{k'}}$, and $\Theta \cdot s[p]:T_{{k'}_p}, s[q]:T_{{k'}_q}\vdash P_{p_{k'}}|P_{q_{k'}}$.
\qed
\end{proof}

\section{A Communication Type System For Progress}
\label{sec:communication}

This section defines a communication type system for guaranteeing the progress of processes in MSSR $\pi$-calculus, especially for the cases when one sequential process plays multiple roles. This is achieved by recording the execution orders of communication events for processes and then checking if there exists a dependence loop indicating that all processes are on waiting status and cannot proceed.

\paragraph{\textbf{Notations}}
First of all, we introduce the notion of \emph{directed events}:
\vspace{-1.5mm}
\begin{itemize}
    \item A \emph{directed event} $(c[r], \text{l}, i)$, where $c$ is a channel, $r$ a role, $\text{l}$ a label and $i>0$ a time index, representing that this event is the $i$-th communication action of $c[r]$ and the label is $\text{l}$. The event is \emph{directed} from the role of $c$ to $r$, e.g. if $c=s[p]$, it means that $p$ sends to or receives from $r$ at this event.

    \item A \emph{directed event with channel mobility} $(c[r], \text{l}, i, c', A)$, where $c'$ is the channel exchanged over this event    
    and $A$ is a set recording the communication history of $c'$ till this event occurs.  Each element of $A$ has form $(c'[p], k)$, where $p$ is a role and $k > 0$, indicating that $k$ times of communications on $c'[p]$ have occurred before $c'$ is moved by this event.      
\end{itemize}
Below we denote the sets of these two kinds of events by $\mathcal{E}$ and $\mathcal{E}_M$, and for simplicity, call them \emph{events} and \emph{mobility events} resp. For any  $e$ in $\mathcal{E}$ or $\mathcal{E}_M$, 
we will use $e.i$ to represent the $i$-th sub-component.

Each communication typing rule takes the form $\Delta \vdash P \triangleright U, M, R$, where  $P$ is the process to be typed, $U \subseteq \mathcal{E}$ is the set of the least events of $P$ such that no event has occurred prior to them, $M$ is the set of mobility events that have occurred in $P$, $R \subseteq \mathcal{E} \times \mathcal{E}$ is a set of relations describing the execution order of events, and $\Delta$ is the communication context for functions, of the form $\{X(\tilde{y}) \triangleright U_X, M_X, R_X\}_{X \in P}$, for any function $X$ called in $P$.
$(e_1, e_2) \in R$ means that $e_1$ occurs strictly before $e_2$ in $P$, and we will write  $e_1 \prec_R e_2$ as an abbreviation ($R$ omitted without confusion). The addition of $M$ is mainly to transfer the history information of each channel to be moved to the corresponding target receiver.

Before presenting the typing rules, we introduce some notations (all symbols are defined as before).   $U\removal c[r]$ removes all events of $c[r]$ from $U$, i.e. $\{e \mid e\in U \wedge e.1 \neq c[r]\}$; 
$pre(x, U)$ is defined as $\{x \prec y \mid y \in U\}$, to represent that $x$ occurs prior to all events in $U$; $\mathcal{C}(d)$ is a boolean condition judging whether  $d$ is a channel value; $M_1 \lhd b \rhd M_2$ returns $M_2$ if $b$ is true, otherwise $M_1$; events
 $(s[p][q], \text{l}, i)$ and $(s[q][p], \text{l}, i)$ are said to be \emph{dual}, and we use $\overline{e}$ to represent the dual event of $e$. At the end, we define the \emph{$k$-th index increase} of events with respect to a fixed pair of sender and receiver $\alpha$, which has the form $c[q]$ for some $c$ and $q$.  Let $e$ be an event $(c[r], l, i)$ and $me$ be a mobility event $ (c[r], l, i, c', A)$:
\[
\begin{array}{lll}
e\uparrow_k^\alpha \triangleq\!
\left \{
\begin{array}{cc}
 (c[r], l, i+k)   &  \mbox{if $\alpha = c[r]$} \\
 e    & \mbox{otherwise}
\end{array}\right. 
\quad 
A\uparrow^\alpha_k \triangleq\!
\left \{
\begin{array}{cc}
 A \!\setminus\! (\alpha, i) \cup \{(\alpha, i+k)\}   &  \mbox{if $(\alpha, i) \in A$} \\
 A \cup \{(\alpha, k)\}    & \mbox{otherwise}
\end{array}\right.
\\
me\uparrow_k^\alpha \triangleq
\left \{
\begin{array}{lll}
 (c[r], l, i+k, c', A)   &  \mbox{if $\alpha = c[r]$} \\
 (c[r], l, i, c', A\uparrow_k^{c'[p]})   &  \mbox{if $\alpha = c'[p]$ for some $p$} \\
 me    & \mbox{otherwise}
\end{array}\right. \\
R\uparrow_k^{c[q]} \triangleq \{e\uparrow_k^{c[q]}\prec e'\uparrow_k^{c[q]} | e \prec e' \in R\} \quad
M \uparrow_k^{c[q]} \triangleq \{me\uparrow_k^{c[q]} | me  \in M\}\\
R \uparrow^A \triangleq \bigcup_{(c[q], k) \in A}R\uparrow_k^{c[q]}
\qquad\qquad\qquad M \uparrow^A \triangleq \bigcup_{(c[q], k) \in A}M\uparrow_k^{c[q]}
\end{array}
\]
$e\uparrow_k^\alpha$ says, when $e$ is occurring on $\alpha$, its time index is increased by $k$, otherwise not changed. 
$me\uparrow^\alpha_k$ not only promotes the event itself with respect to $\alpha$ (the first case), but also the communication history of $c'$   (the second case). As defined by $A\uparrow_k^{\alpha}$, the pair corresponding to $\alpha$ is increased by $k$, or added if it is not in $A$.   $R\uparrow_k^{c[q]}$ and $M \uparrow_k^{c[q]}$ perform the pointwise $k$-th index increase with respect to $c[q]$ to all events in them, $R \uparrow^A$ and $M \uparrow^A$ perform the  index increase of events in $R$ and $M$ with respect to $A$. For simplicity, we omit the subscript $k$ of all the above definitions when $k = 1$.

\begin{figure}[t]
\[
\small{
\begin{array}{ccc}

\prftree[r]{C-select}{\Delta \vdash P \triangleright U, M, R \quad M'= (\emptyset \lhd \mathcal{C}(d) \rhd  \{(c[q], \text{l}, 1, d, \emptyset)\})}
{\begin{array}{ll}
     \Delta\vdash c[q]\oplus \{\text{l} \langle d \rangle.P \}
\triangleright &  \{(c[q], \text{l}, 1)\}, M\uparrow^{c[q]} \cup M', \\
&\  R\uparrow^{c[q]}  \cup \{pre((c[q], \text{l}, 1), U \removal c[q])\}
\end{array}} 
\\[.4em]
\prftree[r]{C-bran}{\forall i\in I \quad \Delta \vdash P_i \triangleright U_i, M_i, R_i 
\quad  M_i'= (\emptyset \lhd \mathcal{C}(y_i) \rhd  \{(c[q], \text{l}_i, 1, y_i, \emptyset)\})}
{\begin{array}{ll}
     \Delta\vdash c[q]\&\{\text{l}_i(y_i).P_i\}_{i\in I} \triangleright &\bigcup_{i\in I}\{(c[q], \text{l}_i, 1)\}, \bigcup_{i\in I}({M_i}\uparrow^{c[q]}\cup M_i'),\\
     & \bigcup_{i\in I}({R_i\uparrow^{c[q]}} \cup \{pre((c[q], \text{l}_i, 1), U_i\removal c[q])\})
\end{array}} \\[.5em]
\prftree[r]{C-exist}{\forall i\in I \quad \Delta \vdash P_i \triangleright U_i, M_i, R_i 
\quad  M_i'= (\emptyset \lhd \mathcal{C}(y_i) \rhd  \{(c[q_i], \text{l}_i, 1, y_i, \emptyset)\})}
{\begin{array}{ll}
     \Delta\vdash \exists_{i \in I}c[q_i]\&\{\text{l}_i(y_i).P_i\} \triangleright &\bigcup_{i\in I} \{(c[q_i], \text{l}_i, 1)\}, \bigcup_{i\in I}(M_i\uparrow^{c[q_i]}\cup M_i'),\\
     & \bigcup_{i\in I}({R_i\uparrow^{c[q_i]}} \cup \{pre((c[q_i], \text{l}_i, 1), U_i\removal c[q_i])\})
\end{array} }\\[.8em]
\prftree[r]{C-0}{}{\Delta \vdash \textbf{0} \triangleright \emptyset, \emptyset, \emptyset} 
\quad \prftree[r]{C-par}{\Delta \vdash P  \triangleright U, M, R \quad \Delta \vdash Q  \triangleright U', M', R'}
{\Delta\vdash  P | Q \triangleright U \cup U', M\cup M', R \cup R' }
\\[.4em]
\prftree[r]{C-call}{ }{\Delta, X(\tilde{x}) \triangleright U, M, R \vdash X(\tilde{d}) \triangleright U[\tilde{d}/\tilde{x}], M[\tilde{d}/\tilde{x}], R[\tilde{d}/\tilde{x}]}
\\[.6em]
\prftree[r]{C-def}{\Delta \vdash P[\textbf{0}/X] \triangleright  U, M, R \quad  \Delta, X(\tilde{x}) \triangleright U, M, R \vdash Q \triangleright  U', M', R'}{\Delta \vdash \textbf{def}\ X(\tilde{x}) = P\ \textbf{in}\ Q \triangleright U', M', R'}\\[.2em]
\prftree[r]{C-unify}{\Delta \vdash P  \triangleright U, M, R \quad  M, R \Rightarrow_u M', R'}
{\Delta\vdash  P  \triangleright U, M', R'} \quad 
\prftree[r]{C-trans}{\Delta \vdash P  \triangleright U, M, R \quad  R \Rightarrow_t R'}
{\Delta\vdash  P  \triangleright U, M, R'}
\\ 
\hline 
\\
\prftree[r]{Unify}{B = A[x/s[m]]}{M \cup  (s[p][q], \text{l}, i, s[m], A) \cup (s[q][p], \text{l}, i, x, \emptyset), R
       \Rightarrow_u
        (M\uparrow^B)[s[m]/x], (R\uparrow^B)[s[m]/x]}\\[.3em]
\prftree[r]{Trans-1}{\{e_1 \prec e_2, \overline{e_2} \prec e_3\} \Rightarrow_t  \{e_1 \prec e_3\}} \quad
\prftree[r]{Trans-2}{R' \subseteq R \quad R' \Rightarrow_t R''}{R \Rightarrow_t R \cup R''}
\end{array}
}
\]
\caption{A Communication Type System for Processes and Auxiliary Definitions}
\label{fig:typesystem2}
\end{figure}


\subsection{Typing Rules}
Fig.~\ref{fig:typesystem2} presents the communication type system for MSSR session calculus and the auxiliary rules for unification and transitivity  of relations, with explanations below: 
\begin{itemize}
    \item Rule (C-select): The prefix selection produces event $(c[q], \text{l}, 1)$, which becomes the (only) least event of the current process. If $d$ is a channel value,  $(c[q], \text{l}, 1, d, \emptyset)$ is added to $M$, especially $\emptyset$ indicating no communication along $d$ has occurred till now. The occurrence of $(c[q], \text{l}, 1)$ promotes the time indexes of events of $c[q]$  in both $R$ and $M$   by 1. The precedence from $(c[q], \text{l}, 1)$ to all events of non-$c[q]$ channels in $U$ holds and the corresponding relations are added to the $R$-set.  

    \item Rule (C-bran): The prefix branch process produces a set of least events $(c[q], \text{l}_i, 1)$ for  $i\in I$. When $y_i$ is a channel, $(c[q], \text{l}_i, 1, y_i, \emptyset)$ is added to $M_i$. The $M_i, R_i$-sets are joined together, and furthermore, they promote all events with respect to $c[q]$. The precedence  from $(c[q], \text{l}_i, 1)$  to non-$c[q]$ events in $U_i$ are added to each $R_i$-set. 

    \item Rule (C-exist): Each existential branch from $q_i$ for $i\in I$ is handled similarly as in (C-bran) to update the three sets.

    \item Rule (C-par): The three communication typing sets corresponding to $P$ and $Q$ are joined together for $P | Q$. 

    \item Rules (C-call) and (C-def): Function call $X(\tilde{d})$ instantiates the communication typing sets for $X(\tilde{x})$ by substituting $\tilde{d}$ for $\tilde{x}$. Function declaration $X(\tilde{x})$ unfolds the function body $P$ once by replacing the (possibly recursive) occurrence of $X$ in $P$ by $\textbf{0}$ to compute the communication typing sets of $X(\tilde{x})$. This is adequate, because by the type system of Fig.~\ref{fig:typesystemfull}, each process is typed with the corresponding projection of a global type, thus all end-point processes conform to the same recursive structure as specified by the global type and one-time unfolding is enough for representing all cases of communication orders between events. 

    \item Rule (C-unify): It defines the unification of $M$ and $R$ according to rule (Unify). Rule (Unify) transfers the channel value $s[m]$ and its communication history record\-ed by $A$ from sender $p$ to receiver $q$, by first promoting the events on $x$ with respect to the history $A[x/s[m]]$ and then substituting all occurrences of $x$ for $s[m]$. Through unification, all variable channels in $R$ are instantiated to constant channels and the events of them have the correct time indexes (globally). 

    \item Rule (C-trans): It defines the transitivity of $R$ according to rules  (Trans-1) and (Trans-2), based on the synchronization of an event and its dual.

\end{itemize}





A relation $R$ is \emph{safe}, if $e \prec \overline{e} \notin R$ for any $e$, i.e. there does not exist an event $e$ such that $e$ and its dual event $\overline{e}$ (that are expected to be synchronized) are in a strict order. 
\begin{theorem}[Progress]
    Suppose $\Theta, \emptyset \vdash (\nu s) P$ and
    $\Delta \vdash P \triangleright U, M, R$, if 
    both (C-unify) and (C-trans) cannot be applied anymore to $M$ and $R$,  and if $R$ is safe, then  $P\to ^* P' \nrightarrow$ implies $P' = \textbf{0}$.
    \label{theorem:progress2}
\end{theorem}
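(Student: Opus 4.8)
The plan is to prove the statement by the classical ``subject reduction plus no stuck configuration'' route, adapted to the communication type system. First I would establish a \emph{subject-reduction lemma}: if $\Delta \vdash P \triangleright U, M, R$ with $R$ safe and with neither (C-unify) nor (C-trans) applicable anymore, and $P \to P''$, then $\Delta \vdash P'' \triangleright U'', M'', R''$ for some $U'', M''$ and some $R''$ that is again safe and fully saturated. Second, a \emph{no-deadlock lemma}: if $\Theta, \emptyset \vdash (\nu s) P'$, $\Delta \vdash P' \triangleright U', M', R'$ with $R'$ safe and saturated, and $P' \nrightarrow$, then $P' = \zero$. The theorem follows by combining them: starting from the hypotheses, every $\to$-step preserves ``safe and saturated'' by the first lemma (re-saturating after the step), so the reduct $P'$ with $P\to^* P' \nrightarrow$ still carries a safe saturated $R'$, and the second lemma gives $P'=\zero$. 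Along the way, session subject reduction (Theorem~\ref{theorem:subjectreduction}) keeps every endpoint of $s$ typed by the projection of one common global type, which is what licenses the conformance arguments used below.

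For the subject-reduction lemma I would do case analysis on the reduction rule. Rules $[Ctx]$ and $[\equiv]$ change nothing of interest, since (C-par) is insensitive to re-association of $|$ and the sets are stable under $\alpha$-renaming; $[X]$ leaves the communication sets unchanged up to the parameter substitution of (C-call), because (C-def) already records the one-step unfolding of the body and, by the type system of Fig.~\ref{fig:typesystemfull}, each endpoint process conforms to the same recursive pattern so one unfolding suffices. The real cases are $[\&\oplus]$ and $[\exists\oplus]$, where a synchronisation fires between a least event $e$ and its dual $\overline e$. There I would show that the communication sets of $P''$ arise from those of $P$ by deleting $e, \overline e$ from $U$ and exposing the least events of the two continuations, applying the appropriate $\uparrow$ index shifts, and — when the transmitted value is a channel — running (C-unify)/(Unify) to transfer the moved channel's history and substitute the constant channel on the receiver side, followed by re-saturation. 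The point is that this transformation induces a map of $R''$ into $R$ commuting with $\overline{(\cdot)}$, so that $e'\prec\overline{e'}\in R''$ would force some $e''\prec\overline{e''}\in R$; hence $R$ safe implies $R''$ safe. Verifying that (Unify) commutes with the index shifts so this embedding survives channel mobility is the most delicate bookkeeping here.

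For the no-deadlock lemma, assume $P' \neq \zero$. Pushing the $\nu$'s and $\textbf{def}$'s outward and using guardedness, $P' \nrightarrow$ forces that no function call is pending (otherwise $[X]$ fires), so $P'$ is a parallel composition of sequential components, each either $\zero$ or headed by a selection, branching, or existential-branching prefix. Define the ``waits-for'' relation on the non-$\zero$ components: a component whose head action has subject $s[p]$ directed at $q$ waits for the component owning $s[q]$ (for an existential branching, pick any of the potential senders). As $s$ has finitely many roles, this relation has a cycle $B_1 \to \dots \to B_n \to B_1$; let $h_j$ be the head event of $B_j$ and $\overline{h_j}$ the event of $B_{j+1}$ that $B_j$ waits on. Since $P'$ is session-typed and $P' \nrightarrow$, matching time indices (maintained by subject reduction) rule out $h_{j+1} = \overline{h_j}$ and force $\overline{h_j}$ to lie strictly later than $h_{j+1}$ inside $B_{j+1}$, on a channel-direction distinct from that of $h_{j+1}$; the precedence facts recorded by the prefix rules, chained through (C-trans) using the dual events contributed by the other components, then give $h_{j+1} \prec \overline{h_j} \in R'$ for every $j$. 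Iterating (Trans-1) around the cycle — from $h_{j+1} \prec \overline{h_j}$ and $h_j \prec \overline{h_{j-1}}$ one gets $h_{j+1} \prec \overline{h_{j-1}}$, and so on — yields $h_1 \prec \overline{h_1} \in R'$, contradicting safety of $R'$. Hence $P' = \zero$.

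The main obstacle is exactly the step ``$h_{j+1} \prec \overline{h_j} \in R'$'': the prefix rules relate a prefix event only to the least events of its \emph{immediate} continuation on a \emph{different} channel (the $\removal c[q]$ drops the same-channel ones and (C-trans) only bridges via duals), so relating the head of a component to a deeper waited-on event requires crossing same-channel runs and intermediate events, and one must exploit that every endpoint projects from one global type and that time indices remain globally consistent along reductions to guarantee that the needed bridging dual events are present. The secondary difficulty, already noted, is the interplay of the $[\&\oplus]/[\exists\oplus]$ cases of subject reduction with (C-unify) under channel mobility.
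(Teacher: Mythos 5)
Your core argument is the paper's own: assume the stuck reduct is not $\zero$, observe it is a parallel composition of prefixed components, build the chain of ``who waits for whom'' by following each component's head event to the component containing its dual, record each such dependency as a strict precedence in the relation, close the chain into a cycle because there are finitely many components, and then chain (Trans-1)/(Trans-2) around the cycle to manufacture some $e \prec \overline{e}$, contradicting safety. Where you genuinely diverge is in the outer decomposition: you factor the theorem into a subject-reduction lemma for the \emph{communication} type system (preservation of ``safe and saturated'' along $\to$, including re-running (Unify) after channel mobility) followed by a no-deadlock lemma applied to the reduct $P'$ with its own re-derived $R'$. The paper never proves such a preservation lemma; instead it types the components of $P'$ directly with the sets obtained from the original $P$ via (C-par) and identifies each event of $P'$ with its occurrence in $P$ ``corresponding to their time indexes in the original $P$,'' so the contradiction is derived against the original $R$ rather than a re-saturated $R'$. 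Your route is structurally cleaner and makes explicit an invariant the paper leaves implicit, at the price of having to verify that (Unify) and the index shifts commute with reduction; the paper's route avoids that lemma but relies on the informal event-tracking. Notably, the obstacle you flag --- bridging from a component's head event to a waited-on event that lies deeper, past same-channel prefixes dropped by $\removal$ and across intermediate duals --- is exactly the step the paper itself handles only by ``recursively applying this conclusion,'' and likewise your treatment of recursion (one unfolding suffices because all endpoints follow the same global recursive structure) matches the paper's separate paragraph on events arising from recursive calls of $X$. So both proofs stand or fall on the same two delicate points; neither resolves them in more detail than the other.
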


\begin{proof}
First given a process $Q$, we write $Q \xrightarrow{s[p][q]\odot l(d))}$ with $\odot \in \{\oplus, \&\}$ if $Q$ is prefixed with the corresponding event, e.g. $c[q]\& \{\text{l}_i(x_i).P_i\}_{i\in I} \xrightarrow{c[q]\&\text{l}_i(x_i)}$ for each $i \in I$. Suppose $P' \nrightarrow$ and $P' \neq \textbf{0}$, then there must exist $P_1, \cdots, P_k$  such that $P' = P_1 | P_2 | \cdots | P_k$ and each $P_i$ must be prefixed processes. From $\Delta \vdash P \triangleright U, M, R$ and (C-Par), we obtain $\forall i \in I$, $\Delta \vdash P_i \triangleright U_i, M_i, R_i$. Given a relation $R$, we define $\mathcal{E}_R = \{e|e\prec e' \in R\ or\ e'\prec e \in R\}$ to be the events that occur in $R$, then $R$ is a partial order relation on $\mathcal{E}_R$. Suppose $P_1 \xrightarrow{s[p_1][q_1]\odot \text{l}_1(d_1)}$, there must exist $k\neq 1$ such that $P_k \xrightarrow{s[p'_k][q'_k]\odot \text{l}_k(d_k)}\longrightarrow^* \xrightarrow{s[q_1][p_1]\overline{\odot} \text{l}_1(d_1')}$, and we have $\{p'_k, q'_k\} \allowbreak\neq \{p_1, q_1\}$ for the session type system. Now we expect to get $((s[p'_k][q'_k], \text{l}_k, i_k) \prec \allowbreak(s[q_1][p_1], \text{l}_1, i_1)) \in R_k$ for some $i_k$ and $i_1$ (corresponding to their time indexes in the original $P$).

Here we examine each case of {$P_k$}.

\noindent \textbf{Case} $P_k = s[p'_k][q'_k]\oplus\{\text{l}_k \langle d \rangle.P'_k \}$: By (C-select), $\Delta \vdash P'_k \triangleright U'_k, M'_k, R'_k$, and $\{pre(\allowbreak(s[p'_k][q'_k], \text{l}_k, i_k), U'_1 \removal s[p'_k][q'_k])\} \subseteq R_k$. So we know that $(s[p'_k][q'_k], \text{l}_k, i_k)\in \mathcal{E}_{R_k}$. Similarly, for $P_k \xrightarrow{s[p'_k][q'_k]\odot \text{l}_k(d_k)}P'_k\longrightarrow^* \xrightarrow{s[q_1][p_1]\overline{\odot} \text{l}_1(d_1')}$, we can get $(s[q_1][p_1], \text{l}_1, i_1)\allowbreak\in \mathcal{E}_{R'_k}$. Observing the form $Q\xrightarrow{e_1} Q'\xrightarrow{e_2} Q''$, we can see that, according to the rules of the type of communication, $e_2$ is in the $U'_Q$ corresponding to $Q'$, and thus $e_1\prec e_2$ is in the $R_Q$ corresponding to $Q$. Recursively applying this conclusion in the above equation, we can know that $(s[p'_k][q'_k], \text{l}_k, i_k) \prec (s[q_1][p_1], \text{l}_1, i_1)\in R_k$.

\noindent \textbf{Case} $P_k = s[p'_k][q'_k]\&\{\text{l}_{ki}(y_{ki}).P'_{ki}\}_{i\in J}$: We assume that $l_{kj} = l_k$. Here by the semantics, we can get $P_k \xrightarrow{s[p'_k][q'_k]\odot \text{l}_k(y_{kj})}P'_k[d_k/y_{kj}]\longrightarrow^* \xrightarrow{s[q_1][p_1]\overline{\odot} \text{l}_1(d_1')}$. By (C-bran), $\Delta \vdash P'_k \triangleright U'_k, M'_k, R'_k$, and $\{pre(\allowbreak(s[p'_k][q'_k], \text{l}, i_k), U'_1 \removal s[p'_k][q'_k])\} \subseteq R_k$. If $s[q_1][p_1]$ is not transmitted through a mobility event, then similar to the previous case, we have $(s[p'_k][q'_k], \text{l}_k, i_k) \prec (s[q_1][p_1], \text{l}_1, i_1)\in R_k$. If $s[q_1][p_1]$ is transmitted through a mobility event, $\mathcal{C}(y_{kj})$ is true. By (C-unify) and (Unify), we can get $A, B$, s.t. $B = A[x/s[q_1][p_1]]$, and there is a mobility event $me \in M$, s.t. $me = (\alpha_k, \text{l}, i_k,\allowbreak s[q_1][p_1], A)$. Notice that $\forall \alpha, i$, $(\alpha, i) \in A \implies \exists j$, $(\alpha, j) \in A\uparrow^\alpha_k$. So for $me\in M$, we can find $j$, s.t. $(s[q_1][p_1], j)\in A$, then $(x, j)\in B$. If $\alpha_k = s[p'_k][q'_k]$, then we can get $(x, \text{l}_1, i_1)\in \mathcal{E}_{R_k'\uparrow^B}$. By replacing $x$ by $s[p_1][q_1]$, we have $((s[p'_k][q'_k], \text{l}_k, i_k) \prec (s[q_1][p_1], \text{l}_1, i_1))\in R_k$. Otherwise, $(s[p'_k][q'_k], l_1, i_1)\in \mathcal{E}_{R_k'\uparrow^B}$, we can use the rule (C-unify) recursively until the channel $s[p_1][q_1]$ is transmitted, so the conclusion still holds.

\noindent \textbf{Case} $P_k = \exists_{i \in I}s[p'_k][q'_{ki}]\&\{\text{l}_{ki}(y_{ki}).P'_{ki}\}$: Replacing $q'_{k}$ by $q'_{ki}$, the proof is the same as that for case $P_k = s[p'_k][q'_k]\&\{\text{l}_{ki}(y_{ki}).P'_{ki}\}_{i\in J}$.

Now we obtain $(s[p'_k][q'_k], \text{l}_k, i_k) \prec (s[q_1][p_1], \text{l}_1, i_1) \in R_k$ for $i_k$ and $i_1$. Now consider the prefix event of $P_k$, there must exist $j \neq k$ such that $P_j \xrightarrow{s[p''_j][q''_j]\odot \text{l}_j(d_j)}\longrightarrow^*\allowbreak \xrightarrow{s[q'_k][p'_k]\overline{\odot} \text{l}_k(d_k')}$. If $j=1$, $p''_j = p_1$ and $q''_j = q_1$, we have $(s[p_1][q_1], \text{l}_1, i_1) \prec (s[q_k][p_k],\allowbreak \text{l}_k, i_k)$, then a reflexive relation is obtained for both $s[p_1][q_1]$ and $s[q_1][p_1]$, $s[p_k][q_k]$ and $s[q_k][p_k]$. If $j \neq 1$,  we look for the dual event of prefix of $P_j$, and repeat the process recursively till we find the dual event of some prefix in the already traversed process. A precedence path that reaches a reflexive relation between two events dual to each other by applying (Trans-1) and (Trans-2) is formed. Denote the path by $\sigma$.

For the events along the path $\sigma$, it belongs to original $P$  (e.g. $s[p][q]\odot l(c)$ with $c$ as a constant), or is instantiated from an event of $P$ containing variables (e.g. $x\odot l(y)$ with $x, y$ as variables). The event of the first case is contained in $R$, while the event of the second case is also contained in $R$ after it applies the rule (Unify) to instantiate the variable subjects of events as constant session channels. If these events are not from the recursive call of some $X$, the path belongs to $R$ and the contradiction is obtained to the fact that $R$ is safe. 

If some event is from the recursive call of some $X$, whole events are not included in $R$ according to rule (C-Def). We consider a recursive call to $X_1$ in a thread $p_1$. For $p_1$, suppose that the set of all events prior to the call to $X_1$ is $A_1$, and the set of all events at the i-th recursive execution of $X_1$ is $B^i_1$. Then, $\forall i, j, a , b_i, b_j$, $i < j$, $a \in A_1$, $b_i \in B^i_1$, $b_j \in B^j_1$, we have $a\prec b_i \prec b_j$. For events in $X_1$, from the fact that process $p_1$ is well-typed with types projected from a global type, we know that their dual events must also be located in the recursive call of the thread in which they are located. Similarly, the above conclusion holds for these threads. Thus, if there exists a loop such that event $e$ in $X_1$ satisfies $e\prec \bar{e}$, then all events contained in this loop must be in recursive calls. This means that if P is safe when it does not contain the part of this recursive call ($\Delta \vdash P[\textbf{0}/X] \triangleright  U, M, R$), and the part of this recursive call is also safe ($\Delta, X(\tilde{x}) \triangleright U, M, R \vdash Q \triangleright U', M', R'$), then the proof still holds.
$\qed$
\end{proof}

\begin{example}
We show how to use the communication type system to type the processes in Exam.~\ref{example:deadlock3}.
In particular, $P_2$ produces a relation between $s[q][p]$ and $s[r][p]$, which is absent in the previous type system. 
\small{
\begin{equation*}
    \dfrac{
        \dfrac{
        \Delta \vdash \zero \triangleright \emptyset, \emptyset, \emptyset
        }{
        \Delta \vdash s[p][q]\oplus\{\text{l}_2(a_2).\zero\} \triangleright \{(s[p][q], \text{l}_2, 1)\}, \emptyset, \emptyset
        }[\text{C-select}]
    }{
    \begin{array}{ll}
        \Delta \vdash s[p][r]\&\{\text{l}_1(x_1).s[p][q]\oplus\{\text{l}_2(a_2).\zero\}\} \triangleright & \{(s[p][r], \text{l}_1, 1)\}, \emptyset,  \\
         & \{(s[p][r], \text{l}_1, 1) \prec (s[p][q], \text{l}_2, 1)\}
    \end{array}
    }[\text{C-bran}]
\end{equation*}
}
\small{
\begin{equation*}
    \dfrac{
        \dfrac{
        \Delta \vdash \zero \triangleright \emptyset, \emptyset, \emptyset
        }{
        \Delta \vdash s[r][p]\oplus\{\text{l}_1(a_1).\zero\} \triangleright \{(s[r][p], \text{l}_1, 1)\}, \emptyset, \emptyset
        }[\text{C-select}]
    }{
    \begin{array}{ll}
         \Delta \vdash s[q][p]\&\{\text{l}_2(x_2).s[r][p]\oplus\{\text{l}_1(a_1).\zero\}\} \triangleright & \{(s[q][p], \text{l}_2, 1)\}, \emptyset,  \\
         & \{(s[q][p], \text{l}_2, 1) \prec (s[r][p], \text{l}_1, 1)\}
    \end{array}
    }[\text{C-bran}]
\end{equation*}
}


By [C-par],  (Trans-1) and (C-Trans),  $P_1|P_2$ produces a relation set containing
$ (s[p][r], \text{l}_1, 1) \allowbreak\prec (s[r][p], \text{l}_1, 1)$, which is not \emph{safe}.


\label{example:comm-deadlock}
\end{example}

\section{Modelling Rust Multi-threads in MSSR Session Types}
\label{sec:application}


Rust offers both message-passing and shared-state mechanisms for concurrent programming. This section presents how to model these different concurrency primitives using MSSR session types.

\paragraph{\textbf{Channel}}
Rust library provides communication primitives for handling multi-producer, single-consumer communications. A channel $x$ contains two parts 
$(tx, rx)$, with $tx$ standing for the transmitter and $rx$ for the receiver. The channel can be synchronous or asynchronous, and allows a buffer to store the values. Let $s$ be the session  created for $(tx, rx)$, $p$ and $q$ the owner threads of $tx$ and $rx$ respectively. 
If the channel is synchronous with buffer size 0, it is consistent with synchronous communication in session types. The Rust primitives $tx.\texttt{send}(val)$ and $a = rx.\texttt{recv}()$, representing that $tx$ sends $val$ and $rx$ receives the value and assigns it to $a$, are encoded as the processes resp. below:  
{\small 
\begin{equation*}
    \begin{aligned}
        tx.\texttt{send}(val):&\ s[p][q]\!\oplus\! \text{x}(val)& \quad 
        a = rx.\texttt{recv}():&\ s[q][p]\& \text{x}(a)& \\
    \end{aligned}
\end{equation*}
}
\noindent where the threads of $tx$ and $rx$  (i.e. $p$ and $q$) are translated as roles,  and the channel name $x$ is translated as the message label. Through this way, each thread plays one role, which satisfies the condition of progress property of the type system of Fig.~\ref{fig:typesystemfull}.  

The transmitter can be cloned to multiple copies, denoted by $tx_i, i\in I$, each of which communicates with the same receiver $rx$. The existential branching process models this case. Suppose $x_i=(tx_i, rx)$, and the threads of $tx_i$ are $p_i$ for $i\in I$, then 
{\small 
\begin{equation*}
    \begin{aligned}
        tx_i.\texttt{send}(val):&\ c[p_i][q]\oplus \text{x}_i(val)&  \quad 
        a = rx.\texttt{recv}():&\ \exists_{i \in I}c[q][p_i]\&\{\text{x}_i(a)\}& \\
    \end{aligned}
    \label{eqn:sendr}
\end{equation*}
}
If the channel has a buffer with size $n > 0$, the transmitter can send without waiting for the first $n$ times. When the buffer is full, the transmitter blocks; when the buffer is empty, the receiver blocks.  To model this case, we introduce a server thread $s_c$, which maintains a queue $q_s$ of capacity $n$ to save the messages and schedule between the transmitters and the receiver.  Its behavior is modelled by $P_0, P_1, ..., P_n$ as follows (denote $q$ by $p_N$ for some $N \notin I$):
\[
\small{
\begin{array}{lll}
     P_0 &= &\exists_{i\in I}c[s_c] [p_i] \&\{\text{x}_i(\_).c[s_c][p_i]\& \text{tx}_i(q_s             .push()).P_{1}\}_{i\in I}  \\
     P_k &= &\exists_{i\in I\cup \{N\}}c[s_c][p_i] \left\{ 
     \begin{array}{ll}
          \text{x}_i(\_).c[s_c][p_i]\& \text{tx}_i(q_s.push()).P_{k+1} \ & i \in I\\
            \text{x}_i(\_).c[s_c][p_i]\!\oplus\! \text{rx}(q_s.pop()).P_{k-1} &i = N\\
     \end{array}          
          \right.  \\ 
     &&\mbox{for} \ k\in \mathbb{N},\ 0 < k < n \\
          P_n &= &c[s_c][p_N]\& \text{x}_N(\_).c[s_c][p_N]\oplus \text{rx}(q_s.pop()).P_{n-1}  
\end{array}
}
\]
Here,  $P_i$ represents that $s_c$ has got $i$ elements in queue $q_x$.  $P_0$ can only receive messages from one of the transmitters $p_i$ and then continues as $P_1$ (it first builds connection with $p_i$ via a message labelled by $x_i$,  then receives the value from $p_i$ labelled by the corresponding transmitter name $tx_i$ and pushes it into $q_s$); For any $1<k<n$, $P_k$ can receive from transmitters $p_i$ and continue as $P_{k+1}$, or send to receiver $q$ (i.e. $p_N$) and continue as $P_{k-1}$; and $P_n$ can only send to receiver $q$. 
If the channel is asynchronous, it can be considered as a synchronous channel with an infinite buffer. 

\paragraph{\textbf{Mutex}}

For shared state concurrency, Rust provides mutex to allow only one thread to access the shared data at one time. Assume threads $t_i, i \in I$ want to access shared data $d$, 
protected by the mutex lock $m$. 
Each thread can call $\texttt{lock}$, $\texttt{try\_lock}$, $\texttt{unlock}$\footnote{Rust releases locks implicitly, but we treat $\texttt{unlock}$ explicitly for illustrating our approach. } to acquire, try to acquire and release the lock resp.  
They are encoded as follows: 
\[
\small{
\begin{array}{lll}
    m.\texttt{lock}():&m[t_i][s_m]\oplus \text{l}_i(\_).m[t_i][s_m]\oplus \text{lock}(\_).m[t_i][s_m]\& \text{ok}(d)\\
    m.\texttt{try\_lock}():&m[t_i][s_m]\oplus \text{l}_i(\_).m[t_i][s_m]\oplus \text{try\_lock}(\_).m[t_i][s_m]\& \{\text{ok}(d), \text{block}(\_)\}\\
    \texttt{unlock}(m):&m[t_i][s_m]\oplus \text{l}_i(\_).m[t_i][s_m]\oplus \text{unlock}(d)\\
\end{array}
}
\]
For each method, thread $t_i$ first sends a message to server $s_m$ to build the connection, followed by the lock/try\_lock/unlock requests resp. Message $\text{ok}$ received from $s_m$ indicates that $t_i$ acquires the lock successfully, while message $\text{block}$ means that the lock is occupied by another thread and $t_i$ fails to get it.
 
A server thread $s_m$ is defined to coordinate the access of the shared data from different threads as follows. $P$ models the initial state, expects to receive a message from $t_i, i\in I$, and for either lock or try\_lock request,  sends $\text{ok}$ to $t_i$. The process continues as $P_i$, which is available to receive messages from either $P_i$
to release the lock and go back to $P$ again, or from another $P_j$ to try to acquire the lock but it is denied and goes back to $P_i$. 

\vspace{-5mm}
\[
\small{
    \begin{array}{lll}
        P = &\exists_{i \in I} m[s_m][t_i]\&\{\text{l}_i(\_).m[s_m][t_i]\&\left\{\begin{aligned}&\text{lock}(\_).m[s_m][t_i]\oplus \text{ok}(d).P_i,&  \\&\text{try\_lock}(\_).m[s_m][t_i]\oplus \text{ok}(d).P_i \end{aligned}\right\}& \\
        P_i = &\exists_{j \in I} m[s_m][t_j]\&\{\text{l}_j(\_).P_{ij}\}, i \in I\\
        P_{ij} = &\left\{\begin{aligned}&m[s_m][t_i]\& \text{unlock}(d).P & ,i = j, i, j\in I\\&m[s_m][t_j]\&\text{try\_lock}(\_).m[s_m][t_j]\oplus \text{block}(\_).P_i &,i \neq j,i, j \in I  \end{aligned}\right.
    \end{array}}
\]
\vspace{-4mm}

\paragraph{\textbf{RwLock}}

Rust provides Read-Write Lock (abbreviated as RwLock) to allow a number of readers or at most one writer for the shared data at any point in one time. The write portion of the lock allows modification of the underlying data, which is exclusive, and the read portion allows for read-only access, which can be shared. Different from mutex,  when a reader holds the lock, other readers are still allowed to acquire the lock. In the following definitions, $t_i, m, S, s_m$ are the same as defined before.


For user thread $t_i$($i\in I$), the pair of methods $\texttt{read}$ and $\texttt{drop}$ acquire and release a guard\footnote{$ReadGuard$ is a structure used to handle the shared read access of a lock, which is created by $\texttt{read}$() and $\texttt{try\_read}$ implicitly, and is dropped when releasing. $WriteGuard$ is a similar structure for the shared write access of a lock.} corresponding to a read lock, while the pair of $\texttt{write}$ and $\texttt{drop}$ do the same for  write lock. They are modelled very similarly with mutex's lock and unlock, defined as follows. There are also corresponding $\texttt{try\_read}$ and $\texttt{try\_write}$ versions for RwLock, but we omit them here.

\[
\small{
\begin{array}{lll}
     m.\texttt{read}()&:m[t_i][s_m]\oplus \text{l}_i(\_).m[t_i][s_m]\oplus \text{read}(\_).m[t_i][s_m]\& \text{ok}(S)\\ \texttt{drop}(ReadGuard(m))&:m[t_i][s_m]\oplus \text{l}_i(\_).m[t_i][s_m]\oplus \text{readEnd}(\_)\\ 
     m.\texttt{write}()&:m[t_i][s_m]\oplus \text{l}_i(\_).m[t_i][s_m]\oplus \text{write}(\_).m[t_i][s_m]\& \text{ok}(S)\\
     \texttt{drop}(WriteGuard(m))&:m[t_i][s_m]\oplus \text{l}_i(\_).m[t_i][s_m]\oplus \text{writeEnd}(S)
\end{array}}
\]

\oomit{
\begin{equation}
\begin{aligned}
    &m.\texttt{read}():m[t_i][s_m]\oplus \text{l}_i(\_).m[t_i][s_m]\oplus \text{read}(\_).m[t_i][s_m]\& \text{ok}(S)\\
    &m.try\_read():m[t_i][s_m]\oplus \text{l}_i(\_).m[t_i][s_m]\oplus \text{try\_lock}(\_).m[t_i][s_m]\& \{\text{ok}(S), \text{block}(\_)\}\\
    &\texttt{drop}(ReadGuard(m)):m[t_i][s_m]\oplus \text{l}_i(\_).m[t_i][s_m]\oplus \text{readEnd}(\_)\\
    &m.\texttt{write}():m[t_i][s_m]\oplus \text{l}_i(\_).m[t_i][s_m]\oplus \text{write}(\_).m[t_i][s_m]\& \text{ok}(S)\\
    &m.try\_write():m[t_i][s_m]\oplus \text{l}_i(\_).m[t_i][s_m]\oplus \text{try\_write}(\_).m[t_i][s_m]\& \{\text{ok}(S), \text{block}(\_)\}\\
    &\texttt{drop}(WriteGuard(m)):m[t_i][s_m]\oplus \text{l}_i(\_).m[t_i][s_m]\oplus \text{writeEnd}(S)\\
\end{aligned}    
\end{equation}
}

For server thread $s_m$, $P$ defines the initial state, which continues as  $P_R^{\{i\}}$ or $P_W^{i}$ after thread $i$ obtains the read or write lock resp. In $P_R^A$, $A$ records the set of threads that hold the read lock, thus it allows to receive a $\text{readEnd}$ message from some thread  $x$ in $A$, followed by the removal of $x$ from $A$, and especially will go back to the initial state if $A$ only contains $x$; it also allows to receive a request from a new thread and add it to $A$. On the contrary, $P_W^i$ can only receive a $\text{writeEnd}$ message from thread $i$ holding the write lock, and goes back to the initial state.  
\oomit{
Here, $T$ is the type of the initial state, which changes to $P_R^{\{i\}}$ after process $i$ acquires the read lock, and to $P_W^{i}$ after process $i$ acquires the read lock.

$P_R^{A}$ represents the current state in which processes in set $A$ owns the lock, which changes to $P_R^{Ai}$ after receiving a message from process $i$. The type $P_R^{Ax}$ determines its own action based on the process set $A$ that currently owns the read lock and the process $x$ that currently sends the message; if $x$ is in the process set $A$ that owns the read lock, then the feasible action is to release the read lock and then remove $x$ from set $A$. Particularly, if $x$ is the only element in set $A$, then return to the initial state $P$. Otherwise, if the action is to try to acquire the read lock, then we insert $x$ to set $A$; or the action is to try to acquire the write lock, but without success, and return to $P_R^{A}$ again.

$P_W^{x}$ represents the current state in which process $x$ owns the write lock, which changes to $P_W^{xi}$ after receiving a message from process $i$. The type $P_W^{xy}$ determines its own action based on the process $x$ that currently owns the write lock and the process $y$ that currently sends the message; if $y$ is the process $x$ that owns the lock, then the feasible action is to release the lock and then return to the initial state $P$; otherwise, the action is to try to acquire the lock, but without success, and then return to $P_W^{x}$ again.
}
\[
\small{
\begin{array}{ll}
      P = &\exists_{i \in I}m[s_m][t_i]\&\{\text{l}_i(\_).m[s_m][t_i]\&\left
      \{\begin{array}{ll}
      \text{read}(\_).m[s_m][t_i]\oplus \text{ok}(S).P_R^{\{i\}},\\
       \text{write}(\_).m[s_m][t_i]\oplus \text{ok}(S).P_W^{i}     
      \end{array}\right\} \\
      P_R^A = &\exists_{i \in I} m[s_m][t_i]\&\{\text{l}_i(\_).P_R^{Ai}\}, A \subseteq I\\
        P_R^{Ax} = &\left\{\begin{array}{lll}
        &m[s_m][t_x]\& \text{readEnd}(\_).P, & A=\{x\},\quad A \subseteq I\\
        &m[s_m][t_x]\& \text{readEnd}(\_).P_R^{A\backslash \{x\}},  & \{x\} \subset A,\quad A \subseteq I\\
        &m[s_m][t_x]\& \text{read}(\_).m[s_m][t_x]\oplus \text{ok}(S).P_R^{A\cup \{i\}}, & x \notin A,\quad A \subseteq I
        \end{array} \right.\\
   P_W^{i} = & m[s_m][t_i]\& \text{writeEnd}(S).P      
\end{array}
}
\]
\oomit{
\begin{equation}
\small{
    \begin{aligned}
        T = &\exists_{i \in I}m[s_m][t_i]\&\{\text{l}_i(\_).m[s_m][t_i]\&\left\{\begin{aligned}&\text{read}(\_).m[s_m][t_i]\oplus \text{ok}(S).T_R^{\{i\}},&  \\&\text{try\_read}(\_).m[s_m][t_i]\oplus \text{ok}(S).T_R^{\{i\}}\\&\text{write}(\_).m[s_m][t_i]\oplus \text{ok}(S).T_W^{i},&  \\&\text{try\_write}(\_).m[s_m][t_i]\oplus \text{ok}(S).T_W^{i} \end{aligned}\right\}\}& \\
        T_R^A = &\exists_{i \in I} m[s_m][t_i]\&\{\text{l}_i(\_).T_R^{Ai}\}, A \subseteq I\\
        T_R^{Ax} = &\left\{\begin{aligned}
        &m[s_m][t_x]\& \text{readEnd}(\_).T & ,A=\{x\},\quad A \subseteq I\\
        &m[s_m][t_x]\& \text{readEnd}(\_).T_R^{A\backslash \{x\}} & ,\{x\} \subset A,\quad A \subseteq I\\
        &m[s_m][t_x]\& \text{read}(\_).m[s_m][t_x]\oplus \text{ok}(S).T_R^{A\cup \{i\}}&,x \notin A,\quad A \subseteq I\\
        &m[s_m][t_x]\&\text{try\_read}(\_).m[s_m][t_x]\oplus \text{ok}(S).T_R^{A\cup \{i\}} &,x \notin A,\quad A \subseteq I \\
        &m[s_m][t_x]\&\text{try\_write}(\_).m[s_m][t_x]\oplus \text{block}(\_).T_R^A &,x \notin A,\quad A \subseteq I  \end{aligned}\right.\\
        T_W^x = &\exists_{i \in I} m[s_m]\{[t_i]\text{l}_i(\_).T_W^{xi}\}, x \in I\\
        T_W^{xy} = &\left\{\begin{aligned}&m[s_m][t_y]\& \text{writeEnd}(S).T & ,x = y,\quad x, y \in I\\&m[s_m][t_y]\&\text{try\_read}(\_).m[s_m][t_y]\oplus \text{block}(\_).T_W^x &,x \neq y,\quad x, y \in I \\&m[s_m][t_y]\&\text{try\_write}(\_).m[s_m][t_y]\oplus \text{block}(\_).T_W^x &,x \neq y,\quad x, y \in I  \end{aligned}\right.
    \end{aligned}
    }
\end{equation}
}

\paragraph{\textbf{Discussion on Rust Programs}}

With the help of the encoding of concurrency primitives, we can represent a whole Rust multi-threaded program as an MSSR process. Some details such as assignments are neglected (similar to most static analysis approaches), while concurrency primitives and control flow such as sequential composition, if-else, loop, and (recursive) functions, are supported by MSSR calculus directly. After a Rust program is translated to an MSSR process, it can be typed according to the type systems defined in this paper, to check whether it is type-safe and deadlock-free. A Rust program that passes the type checking is guaranteed to be type-safe and deadlock-free, but the inverse does not hold, i.e. programs that fail in the check might be safe in execution, mainly due to abstraction of the encoding from Rust to session calculus.


\oomit{
For the processing of conditional statements, we use merge operator $\sqcap$, whose meaning has been shown before. For loops, we use the type of $\mu t.T$ for processing and determine whether $T$ is safe. For function definition and call, we check the safe nature of the function body at function definition time according to the type rules. At function call time, we only need to determine whether the input parameters of the function are of the correct type so that we can determine whether the part calling the function is well-typed.
}

\section{Conclusion and Future Work}
\label{sec:conclusion}

This work introduced an extension of session types with existential branching type for specifying and validating more flexible interactions with multiple senders and a single receiver. This extension can represent many communication protocols, such as Rust multi-thread primitives as encoded in the paper. We established the type system for the extended session types based on projection and global types, to ensure type safety, progress and fidelity. We proposed a more general communication type system for guaranteeing progress without restrictions of existing approaches, by studying the execution orders between communication events among different participants. 

To make full use of the extended session types, 
several future works can be considered. First, the type systems presented in this paper follow a top-down approach based on end-point projections, which might be restrictive to reject some valid processes when the corresponding global types do not exist. To avoid this, we will consider to define the type systems by checking local session types directly instead, as in~\cite{scalas2019less}. Second, as an important application, we are considering to implement the translation from Rust multi-threaded programs to the extended session types for safety checking; Third, we will apply our approach to model and check more practical protocols and applications.   

\oomit{
\section{Implementation}

For the use of one end of a channel in a thread, we distinguish its occurrences in order, and the i-th use of the channel end $rx$($tx$) is noted as $rx_i$($tx_i$).
After typing, matched $tx_i$ and $rx_i$, we can check the type of both and thus determine if the communication is secure.

\begin{equation}
    \begin{aligned}
        &s[p][q]\oplus l . s'[p'][q']\oplus l'\\
        &s[q][p]\& l . s'[q'][p']\& l'
    \end{aligned}
\end{equation}

\begin{equation}
    \begin{aligned}
        &s[p][q]\oplus l . s'[p'][q']\oplus l'\\
        &s'[q'][p']\& l'.s[q][p]\& l
    \end{aligned}
\end{equation}

\begin{equation}
    \begin{aligned}
        &c.send(x); d.send(y);\\
        &d.recv();c.recv();
    \end{aligned}
\end{equation}
}

\bibliography{main}
\bibliographystyle{plain}

\end{document}